\newtheorem{observation}{Fact}
\DeclareMathOperator*{\argmax}{arg\,max}
\begin{document}

\title{\Large Max-Min Diversification with Fairness Constraints: \\ Exact and Approximation Algorithms}
\author{
Yanhao Wang\thanks{East China Normal University. yhwang@dase.ecnu.edu.cn}
\and Michael Mathioudakis\thanks{University of Helsinki. michael.mathioudakis@helsinki.fi}
\and Jia Li\thanks{East China Normal University. jiali@stu.ecnu.edu.cn}
\and Francesco Fabbri\thanks{Spotify. francescof@spotify.com}
}

\date{}

\maketitle

\begin{abstract}\small\baselineskip=9pt Diversity maximization aims to select a diverse and representative subset of items from a large dataset. It is a fundamental optimization task that finds applications in data summarization, feature selection, web search, recommender systems, and elsewhere. However, in a setting where data items are associated with different groups according to sensitive attributes like sex or race, it is possible that algorithmic solutions for this task, if left unchecked, will under- or over-represent some of the groups. Therefore, we are motivated to address the problem of \emph{max-min diversification with fairness constraints}, aiming to select $k$ items to maximize the minimum distance between any pair of selected items while ensuring that the number of items selected from each group falls within predefined lower and upper bounds. In this work, we propose an exact algorithm based on integer linear programming that is suitable for small datasets as well as a $\frac{1-\varepsilon}{5}$-approximation algorithm for any $\varepsilon \in (0, 1)$ that scales to large datasets. Extensive experiments on real-world datasets demonstrate the superior performance of our proposed algorithms over existing ones.

\noindent\textbf{Keywords:} max-min diversification, algorithmic fairness
\end{abstract}

%!TEX root = main.tex
\section{Introduction}
\label{sec:intro}

In recent years, algorithms have been increasingly used for data-driven automated decision-making in many domains of everyday life.
This has raised concerns about the possibility that algorithms may produce unfair and discriminatory decisions for specific population groups, particularly in sensitive socio-computational domains such as voting, hiring, banking, education, and criminal justice~\cite{DBLP:journals/fdata/Olteanu00K19, DBLP:journals/cacm/ChouldechovaR20}.
To alleviate such concerns, there has been a lot of research devoted to incorporating fairness into the algorithms for automated decision tasks, including classification~\cite{DBLP:conf/innovations/DworkHPRZ12}, clustering~\cite{DBLP:conf/nips/Chierichetti0LV17}, ranking~\cite{DBLP:conf/cikm/ZehlikeB0HMB17,DBLP:conf/aaai/NarasimhanCGW20}, matching~\cite{DBLP:conf/ijcai/SankarLNN21}, and data summarization~\cite{DBLP:conf/icml/CelisKS0KV18,DBLP:conf/icml/KleindessnerAM19}.

This paper considers the diversity maximization problem and addresses its fairness-aware variant.
The problem consists in selecting a diverse subset of items from a given dataset and is encountered in data summarization~\cite{DBLP:conf/icml/CelisKS0KV18,DBLP:conf/icdt/Moumoulidou0M21}, web search~\cite{DBLP:conf/wsdm/AgrawalGHI09}, recommendation~\cite{DBLP:journals/kbs/KunaverP17}, feature selection~\cite{DBLP:conf/aaai/ZadehGMZ17}, and elsewhere~\cite{DBLP:conf/cvpr/ZhangLPCS20}.
Existing literature on the problem of diversity maximization primarily focuses on two objectives, namely \emph{max-min diversification} (MMD), which aims to maximize the minimum distance between any pair of selected items, and \emph{max-sum diversification} (MSD), which seeks to maximize the sum of pairwise distances between selected items.
As shown in Figure~\ref{fig:example}, MMD tends to cover the data range uniformly, while MSD tends to pick ``outliers'' and may include highly similar items in the solution.
Since the notion of diversity captured by MMD better represents the property that data summarization, feature selection, and many other tasks target with their solutions, we will only consider MMD in this paper.
To be precise, given a set $V$ of $n$ items in a metric space and a positive integer $k \leq n$, MMD asks for a size-$k$ subset $S$ of $V$ to maximize the minimum pairwise distance within $S$.

In particular, we study the \emph{fair max-min diversification} (FMMD) problem, a variant of MMD that aims not only to maximize the diversity measure defined above but also to guarantee the satisfaction of group fairness constraints as described below.
Let all the items in $V$ be divided into $C$ disjoint groups $V_1, \ldots, V_C$ by a sensitive attribute such as sex or race.
To ensure a fair representation, the number of items selected from each group $V_c$, where $c \in [1, \ldots, C]$, is limited to be between lower and upper bounds specified as input.
This definition of group fairness constraints captures and generalizes several existing notions of \emph{fairness} for groups, including proportional representation~\cite{DBLP:conf/ijcai/CelisHV18,DBLP:conf/nips/HalabiMNTT20}, equal representation~\cite{DBLP:conf/icml/KleindessnerAM19,DBLP:conf/icml/JonesNN20}, and statistical parity~\cite{DBLP:conf/innovations/DworkHPRZ12,DBLP:conf/icml/ZemelWSPD13}, and has been widely used in optimization problems such as top-$k$ ranking~\cite{DBLP:conf/icalp/CelisSV18}, submodular maximization~\cite{DBLP:conf/nips/HalabiMNTT20}, and multiwinner voting~\cite{DBLP:conf/ijcai/CelisHV18}.

\begin{figure}
  \centering
  \subcaptionbox{MMD\label{fig:examplea}}[.4\linewidth]{\includegraphics[height=1.5in]{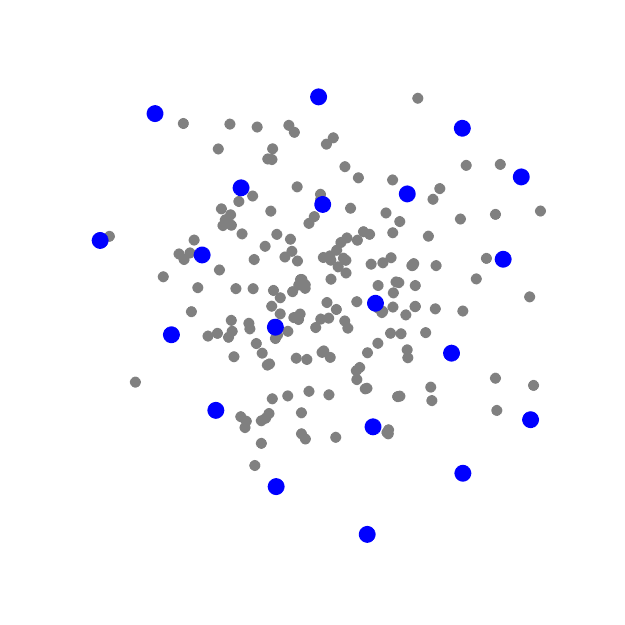}}
  \hspace{1em}
  \subcaptionbox{MSD\label{fig:exampleb}}[.4\linewidth]{\includegraphics[height=1.5in]{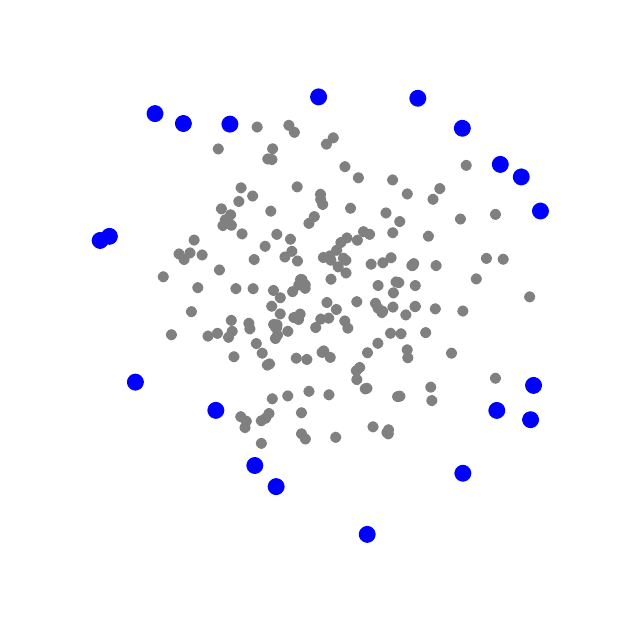}}
  \caption{An illustration of max-min diversification (MMD) vs.~max-sum diversification (MSD). The items in the solutions are marked in blue color.}\label{fig:example}
\end{figure}

\subsection{Related Work}
\label{subsec:literature}

Erkut~\cite{ERKUT199048} proved that the MMD problem is NP-hard in metric spaces.
Ravi \emph{et al.}~\cite{DBLP:journals/ior/RaviRT94} proposed a $\frac{1}{2}$-approximation greedy algorithm~\cite{DBLP:journals/tcs/Gonzalez85} for MMD and proved that no polynomial algorithm could achieve a better approximation factor unless P=NP.
Recently, many different algorithms have been proposed for MMD in various settings.
Indyk \emph{et al.}~\cite{DBLP:conf/pods/IndykMMM14} proposed a $\frac{1}{3}$-approximation distributed algorithm for MMD based on the notion of \emph{coresets}.
Drosou and Pitoura~\cite{DBLP:journals/tkde/DrosouP14} designed a $\frac{b-1}{2b^2}$-approximation cover tree-based algorithm for MMD on dynamic data, where $b$ is the base of the cover tree.
Ceccarello \emph{et al.}~\cite{DBLP:journals/pvldb/CeccarelloPPU17} proposed $(\frac{1}{2}-\varepsilon)$-approximate MapReduce and streaming algorithms for MMD in metric spaces of bounded doubling dimension.
Borassi \emph{et al.}~\cite{DBLP:conf/pods/BorassiELVZ19} proposed a sliding-window algorithm for MMD.
Nevertheless, none of the above algorithms are applicable to FMMD because they cannot guarantee the fulfillment of fairness constraints. 

Moumoulidou \emph{et al.}~\cite{DBLP:conf/icdt/Moumoulidou0M21} first proposed approximation algorithms for the fair variant of MMD.
Addanki \emph{et al.}~\cite{DBLP:conf/icdt/Addanki0MM22} improved the approximation ratios of the algorithms in~\cite{DBLP:conf/icdt/Moumoulidou0M21}.
Wang \emph{et al.}~\cite{DBLP:conf/icde/WangFM22} proposed two streaming algorithms for the fair variant of MMD.
However, these algorithms work for exact-size group fairness constraints, a special case of our bounded-size group fairness constraints.
Moreover, as shown empirically, these algorithms provide lower-quality solutions than ours.

Besides MMD, many other optimization problems have similar group fairness-aware variants -- e.g., determinantal point processes~\cite{DBLP:conf/icml/CelisKS0KV18}, $k$-centers~\cite{DBLP:conf/icml/KleindessnerAM19,DBLP:conf/icml/ChiplunkarKR20,DBLP:conf/icml/JonesNN20}, top-$k$ ranking~\cite{DBLP:conf/icalp/CelisSV18}, submodular maximization~\cite{DBLP:conf/nips/HalabiMNTT20,DBLP:conf/www/0001FM21}, and multiwinner voting~\cite{DBLP:conf/ijcai/CelisHV18}.
However, since their objectives differ from MMD, the algorithms proposed for their fair variants are not directly applicable to FMMD.

\subsection{Our Results}
\label{subsec:results}

The main results of this paper are two novel algorithms for the \emph{fair max-min diversification} (FMMD) problem, which selects a size-$k$ subset $S$ from a dataset $V$ that maximizes the diversity value while satisfying group-fairness constraints.

We first propose \textsf{FMMD-E}, an exact algorithm that is suitable for solving FMMD on small datasets, despite the NP-hardness of the problem. This algorithm exploits the connection between the MMD and maximum independent set (MIS) problems. It formulates FMMD as the problem of finding an independent set of vertices with group fairness constraints on an undirected graph. Then, the optimal solution of FMMD can be obtained in $O(n^k \log n)$ time by solving the reduced problem via integer-linear programming (ILP).

Since \textsf{FMMD-E} cannot scale to large datasets, we propose \textsf{FMMD-S}, a more scalable approximation algorithm for FMMD.
Specifically, for any $\varepsilon \in (0,1)$, \textsf{FMMD-S} provides $\frac{1-\varepsilon}{5}$-approximate solutions for FMMD in $O \big( C k n + C^k \log{\frac{1}{\varepsilon}} \big)$ time.
Under the assumptions that $k = o(\log{n})$ and $C = O(1)$, the time complexity of \textsf{FMMD-S} is reduced to $O\big( n(k + \log{\frac{1}{\varepsilon}}) \big)$.
The basic idea of \textsf{FMMD-S} is, in the first step, to limit the computation to a considerably smaller subset of the original dataset (i.e., \emph{coreset}) by running the greedy algorithm~\cite{DBLP:journals/tcs/Gonzalez85, DBLP:journals/ior/RaviRT94} and, in the second step, to use an ILP-based method similar to \textsf{FMMD-E} to obtain an approximate solution to FMMD from the subset.

Finally, we compare the performance of our algorithms with the state-of-the-art algorithms in~\cite{DBLP:conf/icdt/Moumoulidou0M21,DBLP:conf/icdt/Addanki0MM22,DBLP:conf/icde/WangFM22} for the FMMD problem on real-world datasets. The results show that \emph{i)} \textsf{FMMD-E} provides exact solutions in reasonable time on small datasets (e.g., $n = 1,000$); \emph{ii)} \textsf{FMMD-S} returns solutions of higher quality than existing approximation algorithms in comparable time while scaling to large datasets with millions of items.

%!TEX root = main.tex
\section{Preliminaries}
\label{sec:def}

In this section, we first formally define the FMMD problem, a fairness-aware variant of max-min diversification (MMD), and then provide its hardness result.

\paragraph{Max-Min Diversification (MMD).}
Let $V$ be a set of $n$ items and $d: V \times V \rightarrow \mathbb{R}_{\geq 0}$ be a distance metric that captures the dissimilarities between items in $V$.
We remind that, by definition, $d(\cdot,\cdot)$ satisfies the following properties for any $u, v, w \in V$: \emph{i)} $d(u, v) = 0 \Leftrightarrow u = v$ (identity of indiscernibles); \emph{ii)} $d(u, v) = d(v, u)$ (symmetry); \emph{iii)} $d(u, v) + d(v, w) \geq d(u, w)$ (triangle inequality). 
For MMD, the diversity value $div(S)$ of a subset $S \subseteq V$ is defined as the minimum among all pairwise distances between distinct items in $S$ -- i.e., $div(S) = \min_{u, v \in S \, : \, u \neq v} d(u, v)$. Given a set $V$, a distance function $d(\cdot, \cdot)$, and a positive integer $k \leq n$, the MMD problem asks for a size-$k$ subset $S$ of $V$ such that $div(S)$ is maximized.

\paragraph{Fair Max-Min Diversification (FMMD).}
Let the set $V$ be divided into $C$ disjoint groups $V_1, \ldots, V_C$ by a sensitive categorical attribute, such as sex or race.
For FMMD, the fairness-aware variant of MMD, the group fairness constraints restrict the selection of items from each group $V_c$ for $c \in [C]$ so that the number of items selected from $V_c$ lies within a range of values from $l_c$ to $h_c$ (both inclusive).
Meanwhile, it also requires that the total number of selected items is $k$.
Formally, the collection $\mathcal{F}$ of all feasible solutions for FMMD is
\begin{equation*}
  \mathcal{F} = \{ S \subseteq V : |S| = k \wedge l_c \leq  |S \cap V_c| \leq h_c, \forall c \in [C] \}
\end{equation*}
and, to discard from consideration trivially empty sets $\mathcal{F}$, we will further assume that $l_c \leq h_c \leq |V_c|$ and $\sum_{c=1}^{C} l_c \leq k \leq \sum_{c=1}^{C} h_c$.
The FMMD problem asks for a subset $S$ of $V$ so that $S$ satisfies the group fairness constraints (i.e., $S \in \mathcal{F}$) and $div(S)$ is maximized, or formally, $S^* = \argmax_{S \in \mathcal{F}}{div(S)}$,
where $S^*$ and $\mathtt{OPT} = div(S^*)$ denote the optimal solution of FMMD and its diversity value, respectively.

\paragraph{Hardness of FMMD.}
By using a reduction from the \textsc{Clique} problem, MMD is proven to be NP-hard for general metric spaces and cannot be approximated within any factor greater than $\frac{1}{2}$ unless P=NP~\cite{ERKUT199048, DBLP:journals/ior/RaviRT94}.
Nevertheless, a greedy algorithm provides the best possible $\frac{1}{2}$-approximate solution in $O(k n)$ time~\cite{DBLP:journals/tcs/Gonzalez85}.
Although the greedy algorithm does not work for FMMD directly, as it may provide solutions that do not fall within $\mathcal{F}$ (i.e., are not ``fair''), it will be used as a subroutine for our \textsf{FMMD-S} algorithm in Section~\ref{subsec:alg2} for data reduction.
Since MMD is a special case of FMMD when $C=1$ and $l_1 \leq k \leq h_1$, the hardness result for MMD can be generalized to FMMD as follows:
\begin{theorem}
  FMMD is NP-hard and cannot be approximated by a factor of $\frac{1}{2} + \varepsilon$ for any parameter $\varepsilon > 0$ unless P=NP.
\end{theorem}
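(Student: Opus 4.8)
The plan is to exhibit ordinary MMD as a polynomial-time special case of FMMD and then invoke the known inapproximability of MMD quoted above. First I would fix the fairness parameters that collapse FMMD to MMD: given an MMD instance $(V, d, k)$ with $|V| = n$, build the FMMD instance with a single group, i.e.\ $C = 1$ and $V_1 = V$, with lower bound $l_1 = 0$ and upper bound $h_1 = n$. The nonemptiness preconditions imposed on $\mathcal{F}$ in the definition, namely $l_1 \le h_1 \le |V_1|$ and $\sum_{c} l_c \le k \le \sum_{c} h_c$, hold trivially since $0 \le k \le n$. With these parameters, $|S \cap V_1| = |S|$ for every $S \subseteq V$, so the feasible family becomes $\mathcal{F} = \{ S \subseteq V : |S| = k \}$ -- precisely the set of candidate solutions of the MMD instance -- and the objective $div(\cdot)$ is unchanged. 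Hence this FMMD instance and the source MMD instance have the same optimum $\mathtt{OPT}$, and any feasible FMMD solution is a feasible MMD solution of identical diversity.

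Next I would transfer the lower bounds. The construction above is essentially the identity map on $(V, d, k)$ and incurs only constant overhead, so a polynomial-time algorithm producing a feasible FMMD solution $S$ with $div(S) \ge (\tfrac{1}{2} + \varepsilon)\,\mathtt{OPT}$ would, on inputs of this restricted form, be a polynomial-time $(\tfrac{1}{2} + \varepsilon)$-approximation for MMD. By Ravi \emph{et al.}~\cite{DBLP:journals/ior/RaviRT94}, building on the \textsc{Clique}-based reduction of Erkut~\cite{ERKUT199048}, no such algorithm exists unless P $=$ NP, which yields the claimed inapproximability for FMMD. The same specialization shows that an exact algorithm for FMMD solves MMD, which is NP-hard~\cite{ERKUT199048}; therefore FMMD is NP-hard as well.

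There is essentially no technical obstacle here: the only points requiring care are \emph{i)} checking that the chosen bounds $l_1, h_1$ satisfy the nonemptiness assumptions attached to $\mathcal{F}$, so that the specialized instance is a legitimate FMMD instance as defined, and \emph{ii)} verifying that the approximation factor is preserved \emph{verbatim}, which holds because the reduction leaves both the feasible set and the objective function unchanged. One could compress the whole argument into the single observation that MMD is the $C = 1$ case of FMMD, but making the parameter choice explicit is what makes the approximation-preservation step transparent.
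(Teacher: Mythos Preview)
Your proposal is correct and matches the paper's own argument, which simply observes that MMD is the $C=1$ special case of FMMD (with $l_1 \le k \le h_1$) and inherits the NP-hardness and $\tfrac{1}{2}$-inapproximability of MMD from~\cite{ERKUT199048,DBLP:journals/ior/RaviRT94}. The only cosmetic difference is your specific choice $l_1=0$, $h_1=n$; any $l_1 \le k \le h_1$ works, and if one insists on $l_c \in \mathbb{Z}^+$ you can take $l_1=k$, $h_1=k$ without changing anything.
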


\section{Algorithms}
\label{sec:alg}

In this section, we describe our proposed algorithms for FMMD. First, we propose \textsf{FMMD-E}, an exact algorithm that runs in $O(n^k \log n)$ time (Section~\ref{subsec:alg1}). Second, we propose \textsf{FMMD-S}, a $\frac{1-\varepsilon}{5}$-approximation algorithm that runs in $O \big( C k n + C^{k} \log{\frac{1}{\varepsilon}} \big)$ time for any error parameter $\varepsilon \in (0,1)$ (Section~\ref{subsec:alg2}).

\subsection{An Exact ILP-Based Algorithm}
\label{subsec:alg1}

\begin{figure}
  \centering
  \subcaptionbox{FMMD\label{fig:2a}}[.4\linewidth]{\includegraphics[height=1.5in]{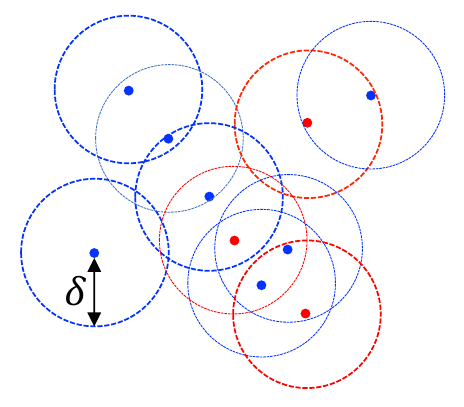}}
  \subcaptionbox{FIS\label{fig:2b}}[.4\linewidth]{\includegraphics[height=1.5in]{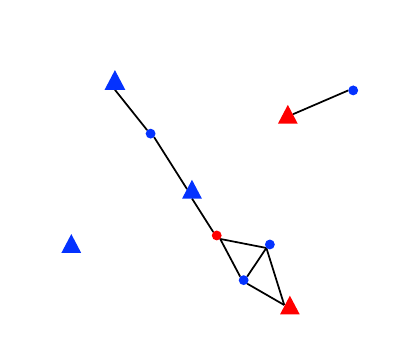}}
  \caption{Example for the reduction from Fair Max-Min Diversification (FMMD) to Fair Independent Set (FIS) on a dataset with $n = 10$ points and $C = 2$ groups in blue and red. An FMMD instance with $k = 5$, $l_c = 2$ and $h_c = 3$ for $c = 1,2$ is reduced to an FIS instance where a fair independent set of vertices is represented by triangles.}\label{fig:2}
\end{figure}

To build an exact algorithm for FMMD, we use ideas similar to~\cite{DBLP:conf/faw/AkagiAHNOOSUUW18} for the reduction from MMD to maximum independent set (MIS).
Given the set of feasible solutions $\mathcal{F}$ and a positive real number $\delta$, the decision version of FMMD asks whether there is a set $S\subseteq V$ such that $S \in \mathcal{F}$ and $div(S) \geq \delta$.
Given an instance of the FMMD decision problem, we build an undirected graph $G=(V, E)$ as follows: the set of vertices in $G$ is identical to $V$ and there is an edge between two vertices $u, v \in V$ if and only if $d(u, v) < \delta$.
We remind that a vertex set $S$ is called an \emph{independent set} iff no two vertices in $S$ are adjacent.
Moreover, we define the \emph{Fair Independent Set} (FIS) problem that determines whether there exists an independent vertex set $S \in \mathcal{F}$ on the graph $G$.
Based on the above definitions, the lemma below asserts the equivalence between FMMD and FIS.
\begin{lemma}\label{lm:fmmd:ivs}
  FMMD is equivalent to FIS.
\end{lemma}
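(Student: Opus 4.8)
The plan is to prove the equivalence first at the level of the decision problems and then lift it to the optimization problems by a standard search over candidate thresholds. Fix a threshold $\delta > 0$ and let $G = (V, E)$ be the associated graph, so that $\{u,v\} \in E \iff d(u,v) < \delta$. The core observation is that, for an arbitrary subset $S \subseteq V$, the inequality $div(S) \geq \delta$ is literally a restatement of ``$S$ is an independent set in $G$'': indeed $div(S) = \min_{u,v \in S,\, u \neq v} d(u,v) \geq \delta$ holds iff every pair of distinct items of $S$ has distance at least $\delta$, which by the definition of $E$ is exactly the statement that $S$ spans no edge of $G$. Since the feasibility family $\mathcal{F}$ is defined identically in the two problems — it depends only on $V$, the groups $V_c$, and the bounds $l_c, h_c, k$, none of which involve $\delta$ or $G$ — we conclude that a set $S$ witnesses a yes-instance of the FMMD decision problem (i.e.\ $S \in \mathcal{F}$ and $div(S) \geq \delta$) if and only if that same $S$ is a fair independent set in $G$. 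Hence the two decision problems have exactly the same yes-instances, with the same witnesses.

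Second, I would argue that the optimization version of FMMD reduces to a small number of FIS queries. Let $D = \{ d(u,v) : u,v \in V,\ u \neq v \}$, a set of at most $\binom{n}{2}$ values. If $\mathcal{F} \neq \emptyset$, then $\mathtt{OPT} = div(S^*)$ is attained as the distance of some pair of items of $S^*$, so $\mathtt{OPT} \in D$. Moreover, feasibility of FIS is monotone in $\delta$: raising $\delta$ only adds edges to $G$, hence only shrinks the collection of independent sets, so if FIS is feasible for a given $\delta$ it is feasible for every $\delta' \leq \delta$. Therefore $\mathtt{OPT}$ equals the largest $\delta \in D$ for which the FIS instance built from $\delta$ admits a fair independent set, and this value — together with a witnessing $S^*$ — can be found by sorting $D$ and binary-searching over it, invoking the decision-level equivalence of the first paragraph at each probed threshold. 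This is where the $O(\log n)$ factor in the running time of \textsf{FMMD-E} comes from.

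I do not expect a genuine obstacle here: the decision-level equivalence is a direct unfolding of the definitions of $div$ and $E$, and the only points needing a line of care are (i) recording that $\mathcal{F}$ is threshold-independent, so that ``fairness'' transfers verbatim between the two problems, and (ii) checking the monotonicity of FIS-feasibility in $\delta$ together with the fact that $\mathtt{OPT}$ is realized by an actual pairwise distance, which is what makes the search over $D$ both correct and finite. If any subtlety remains it is purely presentational: being explicit that ``equivalent'' is meant as polynomial-time interreducibility of the decision problems (and, via the search above, of the optimization problems), rather than as an isomorphism of instances.
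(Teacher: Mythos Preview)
Your proposal is correct, and the decision-level equivalence in your first paragraph is essentially identical to the paper's own proof of the lemma: both unfold the definitions of $div$ and of the edge set $E$ to see that $div(S)\geq\delta$ is exactly the independence of $S$ in $G$, with $\mathcal{F}$ carried over unchanged. Your second paragraph, however, proves more than Lemma~\ref{lm:fmmd:ivs} asserts: the monotonicity of FIS-feasibility in $\delta$, the fact that $\mathtt{OPT}\in D$, and the binary search over $D$ are stated and proved separately in the paper as Facts~\ref{lm:fmmd:mono}--\ref{lm:fmmd:dis} and Theorem~\ref{theorem:reduction}, so for the lemma itself only your first paragraph is needed.
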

\begin{proof}
  In the one direction, assume that the answer to FMMD is `\emph{yes}' -- i.e., there is a subset $S \in \mathcal{F}$ of $V$ with $div(S) \geq \delta$.
  Then, we have $d(u, v) \geq \delta$ for any $u, v \in S$.
  Thus, by construction, there is no edge $(u, v) \in E$, and $S$ is an independent vertex set of $G$.
  Therefore, the answer to FIS is `\emph{yes}' as well. 
  In the opposite direction, assume that the answer to FIS is `\emph{yes}' -- i.e., $ S \in \mathcal{F} $ is an independent set.
  By definition, there is no edge between any of its vertices in $G$, which by construction means that $d(u, v) \geq \delta$ for any $u, v \in S$ and, therefore, we have $div(S) \geq \delta$ for the given $S \in \mathcal{F}$.
  Therefore, the answer to FMMD is also `\emph{yes}'. 
  We thus prove that the answer to FMMD is `\emph{yes}' if and only if the answer to FIS is `\emph{yes}', which concludes the proof.
\end{proof}

Additionally, we have two observations for FMMD, which are easy to verify from its definition.
\begin{observation}[Monotonicity]\label{lm:fmmd:mono}
  If there exists a set $S \in \mathcal{F}$ with $div(S) \geq \delta$, then such a set will exist for any $\delta' \leq \delta$; If there does not exist any set $S \in \mathcal{F}$ with $div(S) \geq \delta$, then such a set will not exist for any $\delta' \geq \delta$.
\end{observation}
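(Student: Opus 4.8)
The plan is to observe that both halves of the statement are immediate consequences of the transitivity of the order $\geq$ on $\mathbb{R}$, combined with the key structural point that the feasibility collection $\mathcal{F} = \{ S \subseteq V : |S| = k \wedge l_c \leq |S \cap V_c| \leq h_c, \forall c \in [C] \}$ does not depend on the threshold $\delta$ at all. So a single witness set $S$ that is feasible remains feasible as $\delta$ varies, and only its diversity value needs to be compared against the threshold.

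First I would prove the first implication directly. Suppose $S \in \mathcal{F}$ with $div(S) \geq \delta$, and let $\delta' \leq \delta$. Since $S$ is feasible (and feasibility is independent of any threshold), and since $div(S) \geq \delta \geq \delta'$ by transitivity, the very same set $S$ certifies the existence of a feasible set with diversity at least $\delta'$. This is the whole argument for the first part.

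Next I would obtain the second implication as the logical contrapositive of the first, applied in the appropriate direction. Concretely: fix $\delta' \geq \delta$ and suppose, for contradiction, that some $S \in \mathcal{F}$ satisfies $div(S) \geq \delta'$. Then $div(S) \geq \delta' \geq \delta$, so $S$ would be a feasible set with $div(S) \geq \delta$, contradicting the hypothesis that no such set exists. Hence no feasible set has diversity at least $\delta'$ either.

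Honestly, the main ``obstacle'' here is simply that the claim is so elementary that the temptation is to say nothing; the one point genuinely worth making explicit in the write-up is that $\mathcal{F}$ is defined purely in terms of cardinality and per-group bounds, so shrinking or enlarging $\delta$ never changes which sets are admissible — it only relaxes or tightens the diversity requirement $div(S) \geq \delta$. Once that is noted, both directions are one line each.
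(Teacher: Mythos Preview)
Your proposal is correct. The paper itself does not give a proof of this observation at all --- it simply states it as a fact that is ``easy to verify from its definition'' --- so your argument (transitivity of $\geq$ together with the threshold-independence of $\mathcal{F}$) is exactly the intended reasoning, just spelled out in more detail than the paper bothers to.
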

\begin{observation}[Discontinuity]\label{lm:fmmd:dis}
  The optimal diversity value $\mathtt{OPT}$ for FMMD is always equal to the distance $d(u, v)$ between some pair of items $u, v \in V$.
\end{observation}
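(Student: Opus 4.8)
The plan is to observe that $\mathtt{OPT}$ is, by definition, the minimum of a finite, nonempty collection of pairwise distances, hence is attained by one such pair. First I would recall that every feasible set $S \in \mathcal{F}$ has exactly $k$ elements, and that the standing assumptions $l_c \le h_c \le |V_c|$ and $\sum_{c} l_c \le k \le \sum_{c} h_c$ made in the Preliminaries guarantee $\mathcal{F} \neq \emptyset$, so an optimal solution $S^*$ with $\mathtt{OPT} = div(S^*)$ exists. I would also flag the mild hypothesis $k \ge 2$; when $k = 1$ the quantity $div(\cdot)$ of a singleton is an empty minimum and the statement is degenerate, so this case is excluded (consistently with the rest of the paper).

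Next I would simply expand the definition: $\mathtt{OPT} = div(S^*) = \min_{u, v \in S^* :\, u \neq v} d(u, v)$. Since $|S^*| = k \ge 2$, the index set of this minimum — the set of unordered pairs of distinct elements of $S^*$ — is finite and nonempty, with $\binom{k}{2} \ge 1$ members. A minimum over a finite nonempty set of real numbers is achieved, so there exist $u^*, v^* \in S^*$ with $u^* \neq v^*$ and $d(u^*, v^*) = \mathtt{OPT}$.

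Finally, since $S^* \subseteq V$, the pair $u^*, v^*$ lies in $V$, so $\mathtt{OPT} = d(u^*, v^*)$ for some pair of items $u^*, v^* \in V$, which is exactly the claim. There is essentially no hard step here: the only care needed is with the edge case $k = 1$ and with confirming that $\mathcal{F}$ is nonempty so that $S^*$ — and therefore the witnessing pair — actually exists, and both points are already secured by the assumptions in Section~\ref{sec:def}. It is worth stating this cleanly nonetheless, since the same finiteness remark is what makes it legitimate to search for $\mathtt{OPT}$ among the $O(n^2)$ pairwise distances in the algorithms that follow.
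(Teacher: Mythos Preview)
Your argument is correct and is precisely the definitional verification the paper leaves implicit: the paper states this observation without proof, noting only that it is ``easy to verify from its definition.'' Your handling of the edge cases (nonemptiness of $\mathcal{F}$ and the implicit assumption $k\ge 2$) is appropriate and matches the standing conventions of Section~\ref{sec:def}.
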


From all the above results, the following theorem asserts that FMMD is reducible to FIS.
\begin{theorem}
\label{theorem:reduction}
  An exact solution of FMMD is obtained by solving $O(\log{n})$ FIS instances.
\end{theorem}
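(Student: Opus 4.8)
The theorem says we can obtain an exact solution of FMMD by solving $O(\log n)$ FIS instances. The plan is to combine Lemma~\ref{lm:fmmd:ivs} (equivalence of FMMD and FIS for a fixed threshold $\delta$), Observation~\ref{lm:fmmd:mono} (monotonicity in $\delta$), and Observation~\ref{lm:fmmd:dis} (the optimal value $\mathtt{OPT}$ is realized as some pairwise distance $d(u,v)$), and wrap them in a binary search over candidate thresholds.

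**The key steps, in order.**

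First I would enumerate the set $D = \{ d(u,v) : u, v \in V, u \neq v \}$ of all pairwise distances; by Observation~\ref{lm:fmmd:dis} we know $\mathtt{OPT} \in D$. Note $|D| \le \binom{n}{2} = O(n^2)$, but the crucial point is that $D$ can be sorted, and $\mathtt{OPT}$ is the largest value $\delta \in D$ for which the FIS instance on the graph $G_\delta = (V, E_\delta)$ — where $E_\delta = \{(u,v) : d(u,v) < \delta\}$ — answers `yes'. Second, I would invoke Observation~\ref{lm:fmmd:mono}: as $\delta$ increases through the sorted values of $D$, the answer to the decision problem (equivalently, by Lemma~\ref{lm:fmmd:ivs}, to FIS on $G_\delta$) is monotone — `yes' for all small enough $\delta$ and `no' for all large enough $\delta$ — so there is a single threshold where it flips. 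This monotone (``staircase'') structure is exactly what licenses binary search. Third, I would run binary search over the sorted array of the $O(n^2)$ candidate distances: at each step, pick the median candidate $\delta$, build $G_\delta$, solve the FIS instance, and recurse into the upper or lower half depending on whether the answer is `yes' or `no'. This performs $O(\log(n^2)) = O(\log n)$ FIS solves. Fourth, I would argue correctness: the binary search returns the largest $\delta \in D$ with a `yes' answer, which by Lemma~\ref{lm:fmmd:ivs} is $\mathtt{OPT}$, and the independent set witnessing that final `yes' instance is, again by Lemma~\ref{lm:fmmd:ivs}, a feasible FMMD solution $S \in \mathcal{F}$ achieving $div(S) = \mathtt{OPT}$ — hence an exact optimal solution.

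**The main obstacle.**

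The mathematical content is light — the reduction and monotonicity do all the work — so the only real subtlety is bookkeeping around the threshold semantics: because edges are defined by the strict inequality $d(u,v) < \delta$, one must be careful that testing $\delta$ equal to a distance value (rather than, say, a value strictly between two consecutive distances) gives the intended graph, and that the binary search correctly identifies $\mathtt{OPT}$ as a member of $D$ rather than overshooting to a non-realized value. I would handle this by always drawing the binary-search pivots from the sorted list of realized distances in $D$, so that the `yes/no' flip occurs exactly at $\mathtt{OPT}$. A secondary, purely cosmetic point is reconciling the $O(\log n)$ count with $|D| = O(n^2)$: since $\log(n^2) = 2\log n = O(\log n)$, the stated bound holds; I would state this explicitly so the reader does not expect only $n$ candidates. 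Everything else — building $G_\delta$, extracting the witnessing set — is routine and not part of the ``solving FIS instances'' count that the theorem is about.
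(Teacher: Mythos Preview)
Your proposal is correct and follows essentially the same approach as the paper: enumerate and sort the $O(n^2)$ pairwise distances, then binary search over them using Lemma~\ref{lm:fmmd:ivs} and Observations~\ref{lm:fmmd:mono}--\ref{lm:fmmd:dis}, noting that $\log(n^2)=O(\log n)$. Your extra care about the strict-inequality threshold semantics is sound and slightly more explicit than the paper's own treatment.
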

\begin{proof}
  Let us consider the following algorithm.
  First, compute and sort the distances between all pairs of items in $V$.
  Then, use a binary search on the sorted array of pairwise distances to find the largest $d^*$ such that the answer to its corresponding FIS instance is `\emph{yes}'.
  The binary search finds $d^*$ in $O(\log{n})$ steps, as the number of pairwise distances is $O(n^2)$.
  And it holds that $div(S^*) \geq d^*$ from Lemma~\ref{lm:fmmd:ivs}. Observation~\ref{lm:fmmd:mono} guarantees that there does not exist any $S \in \mathcal{F}$ with $div(S) > d^*$ due to the maximality of $d^*$. Observation~\ref{lm:fmmd:dis} ensures that $d^*$ is exactly equal to $\mathtt{OPT}$. Thus, the above procedure identifies the exact solution to FMMD.
\end{proof}

\paragraph{The ILP Formulation of FMMD.}
In light of Theorem~\ref{theorem:reduction}, what remains to obtain an exact algorithm for FMMD is to design an exact algorithm for FIS. We note that FIS without fairness constraints is equivalent to the \emph{maximum independent set} (MIS) problem. We thus adapt the edge-based integer-linear programming (ILP) formulation of MIS by adding fairness constraints to define an FIS instance, as shown in Eq.~\ref{ilp:obj}--\ref{ilp:constr:4}.
\begin{align}
  \max          \quad & z = \sum_{i=1}^{n} x_i \label{ilp:obj}\\
  \mathrm{s.t.} \quad & x_i + x_j \leq 1, \forall  (v_i, v_j) \in E \label{ilp:constr:1}\\
                      & \sum_{i=1}^{n} x_i \leq k \label{ilp:constr:2}\\
                      & l_c \leq \sum_{v_i \in V_c} x_i \leq h_c, \forall c \in [C] \label{ilp:constr:3}\\
                      & x_i \in \{0, 1\}, \forall i \in [n] \label{ilp:constr:4}
\end{align}
where $x_i$ is a binary variable to indicate whether $v_i \in V$ is included in the solution or not, the objective function in Eq.~\ref{ilp:obj} and the first constraint in Eq.~\ref{ilp:constr:1} are the same as the edge-based ILP formulation of MIS, the second constraint in Eq.~\ref{ilp:constr:2} limits the solution size to at most $k$, and the third constraint in Eq.~\ref{ilp:constr:3} is on the upper and lower bounds of the number of items chosen from each group $V_c$. By solving the ILP in Eq.~\ref{ilp:obj}--\ref{ilp:constr:4} optimally, we will either find a fair independent set $S = \{ v_i \in V : x_i = 1, i \in [n] \}$ of $G$ if $z = k$ or confirm that there does not exist such a set if $z < k$.

\begin{algorithm}[tb]
  \caption{\textsf{FMMD-E}}
  \label{alg:fmmd:e}
  \small
  \begin{algorithmic}[1]
    \REQUIRE Dataset $V = \bigcup_{c=1}^{C} V_c$ with $n = |V|$; lower and upper bounds $l_c, h_c \in \mathbb{Z}^{+}$ for $c \in [C]$; size constraint $k \in \mathbb{Z}^{+}$.
    \ENSURE A set $S^* \subseteq V$ such that $S^* \in \mathcal{F}$.
    \STATE Compute the distances of all pairs of items in $V$ and sort them ascendingly as $D[1, \ldots, \frac{n(n-1)}{2}]$
    \STATE Let $L \gets 1$, $H \gets \frac{n(n-1)}{2}$, $cur \gets \frac{L + H}{2}$, $S^* \gets \emptyset$
    \WHILE{$H > L$}
      \STATE Build an undirected graph $G=(V,E)$ where $E = \{(u, v) \in V \times V \;|\; d(u, v) < D[cur] \}$
      \STATE Compute the solution $\mathbf{x}$ of the ILP in Eq.~\ref{ilp:obj}--\ref{ilp:constr:4}
      \STATE Find a set $S = \{ v_i \in V : x_i = 1, i \in [n] \}$ based on $\mathbf{x}$
      \IF {$|S| = k$}
        \STATE If $div(S) > div(S^*)$ or $S^* = \emptyset$, then $S^* \gets S$
        \STATE Let $L \gets cur + 1$ and $cur \gets \frac{L + H}{2}$
      \ELSE
        \STATE Let $H \gets cur - 1$ and $cur \gets \frac{L + H}{2}$
      \ENDIF
    \ENDWHILE
    \STATE \textbf{return} $S^*$
  \end{algorithmic}
\end{algorithm}

\paragraph{Algorithm Description and Complexity.}
By combining the constructive proof of Theorem~\ref{theorem:reduction} and the ILP formulation of FMMD, we obtain \textsf{FMMD-E}, an exact algorithm for FMMD, as presented in Algorithm~\ref{alg:fmmd:e}.
First, it computes the distances of all $\frac{n(n-1)}{2}$ pairs of distinct items in $V$ in $O(n^2)$ steps and sorts them in ascending order in an array $D[1, \ldots, \frac{n(n-1)}{2}]$ in $O(n^2 \log n)$ steps.
Then, a binary search is performed on $D$ to find $\mathtt{OPT}$ in $O(\log n)$ steps.
For each guess $D[cur]$ of $\mathtt{OPT}$, it builds an undirected graph $G$ in $O(n^2)$ steps and finds a set $S$ by solving the ILP in Eq.~\ref{ilp:obj}--\ref{ilp:constr:4} in $\binom{n}{k} = O(n^k)$ steps.
If $|S| = k$, then $S \in \mathcal{F}$ and $div(S) \geq D[cur]$. In this case, the search space is narrowed to the upper half to check whether there is a better solution. Otherwise, or if $|S| < k$, then $\mathtt{OPT} < D[cur]$ and the search space is narrowed to the lower half.
Finally, when the binary search is terminated, the algorithm has found the exact solution $S^*$ to FMMD.
The time complexity of \textsf{FMMD-E} is $O(n^k \log n)$. Moreover, since $|D| = |E| = O(n^2)$, its space complexity is $O(n^2)$.

\subsection{A More Scalable Approximation Algorithm}
\label{subsec:alg2}

\begin{algorithm}[tb]
  \caption{\textsf{FMMD-S}}
  \label{alg:fmmd:s}
  \small
  \begin{algorithmic}[1]
    \REQUIRE Dataset $V = \bigcup_{c=1}^{C} V_c$ with $n = |V|$; lower and upper bounds $l_c, h_c \in \mathbb{Z}^{+}$ for $c \in [C]$; size constraint $k \in \mathbb{Z}^{+}$; error parameter  $\varepsilon \in (0,1)$.
    \ENSURE A set $S \subseteq V$ such that $S \in \mathcal{F}$.
    \STATE Pick an arbitrary item $u_1$ from $V$ and set $U = \{u_1\}$\label{ln:gmm:s}\label{line:startinit}
    \FOR{$i \gets 2,\ldots,k$}
      \STATE $u_i \gets \argmax_{v \in V} \min_{u \in U} d(u, v)$ and $U \gets U \cup \{u_i\}$\label{ln:gmm:t}
    \ENDFOR \label{line:endinit}
    \STATE Set $U_c \gets V_c \cap U$, $d' \gets 2 \cdot div(U)$\label{line:partition}
    \REPEAT \label{line:startfill}
      \FOR{$c \gets 1,\ldots,C$}
        \WHILE{$|U_c| < k$ and there exists some item $v \in V_c$ such that $div(U_c \cup \{v\}) \geq d'$\label{ln:fmmd:s:col}}
          \STATE $u'_{c} \gets \argmax_{v \in V_c} \min_{u \in U_c} d(u, v)$
          \STATE $U_c \gets U_c \cup \{u'_{c}\}$
        \ENDWHILE
      \ENDFOR \label{line:endfill}
      \STATE Build an undirected graph $G=(V',E)$, where $V' = \bigcup_c U_c$ and $E = \{(u, v) \in V' \times V' \;|\; d(u, v) < \frac{d'}{2}\}$
      \STATE Compute the solution $\mathbf{x}$ of the ILP in Eq.~\ref{ilp:obj}--\ref{ilp:constr:4} \label{line:startexact}
      \STATE Find a set $S = \{ v_i \in V : x_i = 1, i \in [n] \}$ based on $\mathbf{x}$\label{line:endexact}
      \IF {$|S| < k$}\label{line:startsolution}
        \STATE $S \gets \emptyset$ and $d' \gets (1-\varepsilon) \cdot d'$
      \ENDIF
    \UNTIL{$S \neq \emptyset$} 
    \STATE \textbf{return} $S$\label{line:endsolution}
  \end{algorithmic}
\end{algorithm}

The main drawback of \textsf{FMMD-E} is that it cannot handle large datasets due to exponential complexity. Standard optimization libraries, such as CPLEX\footnote{\url{www.ibm.com/products/ilog-cplex-optimization-studio}} and Gurobi\footnote{\url{www.gurobi.com/products/gurobi-optimizer/}}, can only solve ILPs with up to several thousand variables optimally in a reasonable time. A natural approach to addressing this challenge is to identify a ``\emph{coreset}'', i.e., a small subset of the original dataset on which the exact algorithm is run to look for approximate solutions.
Formally, a subset $V' \subseteq V$ is called an $\alpha$-coreset ($0 \leq \alpha \leq 1$) of $V$ for FMMD if $\mathtt{OPT}[V'] \geq \alpha \cdot \mathtt{OPT}$, where $\mathtt{OPT}[V']$ is the optimal diversity value for FMMD on $V'$.

It now remains to answer \emph{i)} how such a coreset is built and \emph{ii)} what approximation factor is obtained.
For \emph{i)}, we are inspired by the notion of \emph{composable coresets}~\cite{DBLP:conf/aaai/ZadehGMZ17,DBLP:conf/pods/IndykMMM14} for MMD in streaming and distributed settings. The basic idea is first to partition the dataset and run the greedy algorithm of~\cite{DBLP:journals/tcs/Gonzalez85} on each partition to obtain a partial solution and then compute a final solution from the union of partial solutions. In the context of FMMD, the dataset is naturally divided into $C$ groups. Thus, we first find a solution from each group, then consider the union of all group-specific solutions as our \emph{coreset}, and finally use \textsf{FMMD-E} to obtain a solution from the coreset, which is feasible since the coreset size is small. 
We refer to the resulting algorithm as \textsf{FMMD-S}.
For \emph{ii)}, we prove that the obtained solution offers an approximation factor of $\frac{1-\varepsilon}{5}$ for any $\varepsilon \in (0,1)$.

\paragraph{Algorithm Description.} 
\textsf{FMMD-S} is described in Algorithm~\ref{alg:fmmd:s}. 
Initially, it invokes the greedy algorithm on $V$ without fairness constraints to compute an initial solution $U$ (Lines~\ref{line:startinit}-\ref{line:endinit}). 
Note that the greedy algorithm is $\frac{1}{2}$-approximate for MMD, and any feasible solution of FMMD must also be feasible for MMD. 
Therefore, the optimal diversity $\mathtt{OPT}$ of FMMD is bounded by $2 \cdot div(U)$.
Subsequently, the algorithm divides $U$ by group into $U_1, \ldots, U_C$ and guesses $\mathtt{OPT}$ as its upper bound $d' = 2 \cdot div(U)$ (Line~\ref{line:partition}). 
For each $c \in [C]$, it runs the greedy algorithm to add new items from $V_c$ to $U_c$ until $|U_c| = k$ or there does not exist any $v \in V_c$ to make $div(U_c \cup \{v\}) \geq d'$ (Lines~\ref{line:startfill}-\ref{line:endfill}). 
At this point, each $U_c$ is a partial group-specific solution, and the union $V' = \bigcup_c U_c$ of partial solutions is the \emph{coreset}.
Next, using a similar procedure to \textsf{FMMD-E}, it builds a graph $G$ on $V'$ with diversity threshold $\frac{d'}{2}$ and solves the ILP of Eq.~\ref{ilp:obj}--\ref{ilp:constr:4} on $G$ to obtain a solution $S$ (Lines~\ref{line:startexact}-\ref{line:endexact}). Finally, if $|S| = k$, we have got a solution $S \in \mathcal{F}$ with $div(S) \geq \frac{d'}{2}$ and $S$ will be returned as the final solution; otherwise, $d'$ is decreased by a factor of $1-\varepsilon$, where $\varepsilon \in (0,1)$ is an error parameter, and the above procedure is executed again for the smaller $d'$ until a feasible solution $S$ is found (Lines~\ref{line:startsolution}-\ref{line:endsolution}).

\paragraph{Theoretical Analysis.}
Next, we give the complexity and approximation factor of \textsf{FMMD-S}.

\begin{theorem}\label{thm:fmmd:s:approx}
  \textnormal{\textsf{FMMD-S}} is a $\frac{1-\varepsilon}{5}$-approximation algorithm for FMMD running in $O \big( C k n + C^{k} \log{\frac{1}{\varepsilon}} \big)$ time.
\end{theorem}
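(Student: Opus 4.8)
The plan is to prove, in order, that \textsf{FMMD-S} (i) terminates with a feasible $S\in\mathcal{F}$, (ii) returns $S$ with $div(S)\ge\frac{1-\varepsilon}{5}\,\mathtt{OPT}$, and (iii) runs within the stated bound. Feasibility is immediate: whenever the algorithm returns, $|S|=k$ and constraints~(\ref{ilp:constr:2})--(\ref{ilp:constr:3}) force $S\in\mathcal{F}$; and since the returned $S$ is an independent set of the graph whose edges join pairs at distance $<\frac{d'}{2}$, we always have $div(S)\ge\frac{d'}{2}$. The heart of the argument is the following claim: \emph{for every threshold $d'\le\frac{2}{5}\,\mathtt{OPT}$, the coreset $V'=\bigcup_c U_c$ computed at that threshold contains a feasible fair independent set of size $k$ in the graph with edge set $\{(u,v):d(u,v)<\frac{d'}{2}\}$, so the ILP in Lines~\ref{line:startexact}--\ref{line:endexact} returns a set of size exactly $k$.} Granting this claim, the repeat loop cannot survive the first guess $d'\le\frac25\mathtt{OPT}$, so \textsf{FMMD-S} terminates.

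To prove the claim I would use a composable-coreset / proxy argument. Since the greedy set $U$ is $\tfrac12$-approximate for unconstrained MMD and every feasible FMMD solution is MMD-feasible, $\mathtt{OPT}\le 2\,div(U)$ and, in the regime $d'\le\frac25\mathtt{OPT}$, $div(U)\ge\frac{\mathtt{OPT}}{2}>\frac25\mathtt{OPT}\ge d'$; hence for each group $c$ the inner while loop actually runs and halts either with $|U_c|=k$ (call $c$ \emph{full}) or with every $v\in V_c$ within distance $d'$ of some point of $U_c$ (call $c$ a \emph{net}). Let $S^*_c=S^*\cap V_c$, so $l_c\le|S^*_c|\le h_c$ and every pair inside $S^*$ is at distance $\ge\mathtt{OPT}$. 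I would build an injection $\pi:S^*\to V'$ processing the net groups first and then the full groups: for a net group $c$, let $\pi(s)$ be a point of $U_c$ nearest to $s$, so $d(s,\pi(s))<d'$; for a full group $c$, assign the $|S^*_c|$ points of $S^*_c$ to distinct points of $U_c$ that lie at distance $\ge\frac{d'}{2}$ from every already-assigned proxy. This last step succeeds because the points of $U_c$ are pairwise $\ge d'$ apart (the $V_c\cap U$ points are $\ge div(U)>d'$ apart and each greedily added point is $\ge d'$ from all others), so each previously assigned proxy excludes at most one point of $U_c$, and at most $k-|S^*_c|$ proxies have been assigned when $c$ is processed.

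It then remains to check that $\pi(S^*)$ is a legal ILP solution. Injectivity within a net group needs $2d'<\mathtt{OPT}$ (a collision would place two preimages within $2d'$), which holds since $2d'\le\frac45\mathtt{OPT}<\mathtt{OPT}$; across groups and within full groups injectivity is built in. Given injectivity, $|\pi(S^*)|=k$ and $|\pi(S^*)\cap V_c|=|S^*_c|\in[l_c,h_c]$, so $\pi(S^*)$ is fair. For independence at threshold $\frac{d'}{2}$: two net-group proxies satisfy $d(\pi(s),\pi(s'))\ge d(s,s')-d(s,\pi(s))-d(s',\pi(s'))>\mathtt{OPT}-2d'$; proxies within a full group are $\ge d'$ apart; and any proxy involving a full group is $\ge\frac{d'}{2}$ from the others assigned before it by construction and the processing order. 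All these quantities are $\ge\frac{d'}{2}$ precisely when $\mathtt{OPT}-2d'\ge\frac{d'}{2}$, i.e. $d'\le\frac25\mathtt{OPT}$. I expect this step — the triangle-inequality bookkeeping that yields the exact constant $\frac25$ (and hence the factor $\frac15$), together with the handling of full groups and the correct group-processing order — to be the main obstacle; the net-only case is the familiar composable-coreset estimate, and the full-group case is the new wrinkle introduced by the upper bounds $h_c$.

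Finally, for (ii): the repeat loop exits at the first guess $d'$ with ILP success. If that is the initial guess, then $d'=2\,div(U)\ge\mathtt{OPT}$ and $div(S)\ge\frac{d'}{2}\ge\frac{\mathtt{OPT}}{2}>\frac{1-\varepsilon}{5}\mathtt{OPT}$; otherwise the previous guess $\frac{d'}{1-\varepsilon}$ failed, so by the claim $\frac{d'}{1-\varepsilon}>\frac25\mathtt{OPT}$ and again $div(S)\ge\frac{d'}{2}>\frac{1-\varepsilon}{5}\mathtt{OPT}$. For (iii): Lines~\ref{line:startinit}--\ref{line:endinit} take $O(kn)$; over the whole execution the filling loops add at most $k$ items to each of the $C$ groups with $O(|V_c|)$ work per step/scan, totalling $O(Ckn)$; each repeat iteration then builds a graph on $|V'|=O(Ck)$ vertices and solves an ILP on $O(Ck)$ binary variables, which costs $O(C^{k})$ by the same enumeration bound used for \textsf{FMMD-E} with $n$ replaced by $|V'|$; and since $d'$ starts at $\Theta(\mathtt{OPT})$, shrinks by the factor $1-\varepsilon$ each round, and the loop stops once $d'\le\frac25\mathtt{OPT}$, the loop runs $O(\log\frac{1}{\varepsilon})$ times. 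Combining, the total is $O(Ckn+C^{k}\log\frac{1}{\varepsilon})$.
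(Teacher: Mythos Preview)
Your proposal is correct and follows essentially the same approach as the paper's proof: the paper also splits groups into the two cases $|U_c|<k$ (your ``net'' groups, its Case~\#1) and $|U_c|=k$ (your ``full'' groups, its Case~\#2), processes them in the same order, and uses the same nearest-neighbor proxy map for net groups together with the same ``each proxy blocks at most one point of $U_c$'' counting argument for full groups, arriving at the identical threshold $d'\le\frac{2}{5}\mathtt{OPT}$ via the same triangle-inequality bookkeeping. Your termination/approximation argument and running-time analysis likewise mirror the paper's.
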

\begin{proof}
  If there is any set $S' \subseteq V'$ s.t.~$S' \in \mathcal{F}$ and $div(S') \geq \frac{d'}{2}$, then \textsf{FMMD-S} identifies such $S'$ (Line~\ref{line:endexact}) from the exact solution of the ILP in Eq.~\ref{ilp:obj}--\ref{ilp:constr:4}. 
  In addition, since the greedy algorithm (Lines~\ref{line:startinit}-\ref{line:endinit}) is $\frac{1}{2}$-approximate~\cite{DBLP:journals/ior/RaviRT94}, the initial value of $d'$ is at least $2\cdot\frac{\mathtt{OPT}}{2} = \mathtt{OPT} > \frac{2}{5} \cdot \mathtt{OPT}$.
  Therefore, to prove the approximation factor, it suffices to show that there exists some $S' \subseteq V'$ s.t.~$S' \in \mathcal{F}$ and $div(S') \geq \frac{d'}{2}$ when $d' \in [\frac{2(1-\varepsilon)}{5} \cdot \mathtt{OPT}, \frac{2}{5} \cdot \mathtt{OPT}]$. 
  
  Towards this end, we next construct such a set $S'$ from $V'$. Let $S^*$ be the optimal solution for FMMD on $V$, and $S^*_c = S^* \cap V_c$ be its subset from group $c$. First, we initialize $S' = \emptyset$. Then, we consider two cases for different groups. We consider first the groups of Case \#1 in arbitrary order, then those of Case \#2 in arbitrary order, and select $|S^*_c|$ items from each group $c$ into $S'$.
  
  \underline{Case \#1} ($|U_c| < k$): Let $f: V_c \rightarrow U_c$ map each item $v \in V_c$ to its nearest neighbor $f(v)$ in $U_c$. 
  Note that the condition in Line~\ref{ln:fmmd:s:col} ensures that $d(v, u) < d'$ for any $v \in V_c$ and $u\in U_c$. 
  For each item $s_{c, i} \in S^*_c$, we add item $f(s_{c, i})$ into $S'$. 
  We now show that the added items are distinct.
  Indeed, if $f(s_{c, i}) \equiv f(s_{c, j})$ for $i \neq j$, then the triangle inequality would give $d(s_{c, i}, s_{c, j}) \leq d(s_{c, i},f(s_{c, i})) + d(s_{c, j},f(s_{c, j})) = d(s_{c, i},f(s_{c, i})) + d(s_{c, j},f(s_{c, i})) < 2\cdot d' \leq \frac{4}{5} \cdot \mathtt{OPT} < \mathtt{OPT}$; however, at the same time we have $d(s_{c, i}, s_{c, j}) \geq \mathtt{OPT}$ because $s_{c, i}, s_{c, j} \in S^*$, which leads to a contradiction. 
  Moreover, because we have identified for each $s_{c, i} \in S^*_c$ one distinct item in $U_c$, we have $|U_c| \geq |S^*_c|$. 
  After processing all the groups in {Case \#1}, we have $d(f(s^*_i), f(s^*_j)) \geq d(s^*_i, s^*_j) - d(s^*_i, f(s^*_i)) - d(s^*_j, f(s^*_j)) > \mathtt{OPT} - 2 d' \geq \frac{\mathtt{OPT}}{5}$ for any $f(s^*_i), f(s^*_j) \in S'$ and thus $div(S') > \frac{\mathtt{OPT}}{5}$.
  
  \underline{Case \#2} ($|U_c| = k$): Let $g: U_c \rightarrow S'$ map each item $u \in U_c$ to its nearest neighbor $g(u)$ in the current instance of $S'$. 
  We remove from $U_c$ every $u \in U_c$ with $d(u, g(u)) < \frac{d'}{2}$.
  Because the condition of Line~\ref{ln:fmmd:s:col} ensures $d(u_{c, i}, u_{c, j}) \geq d'$ for any $i \neq j$, there is at most one item removed for each item in $S'$ -- otherwise, the triangle inequality would give $d(u_{c, i}, u_{c, j}) < d'$, thus leading to a contradiction.
  Therefore, at least $k - |S'|$ items remain in $U_c$.
  Moreover, $|S^*_c| \leq k - |S'|$ because $S'$ always contains the same number of items from each considered group as $S^*$ throughout the construction process.
  We pick $|S^*_c|$ items from the remaining ones and add them to $S'$.
  After this operation, we still have $div(S') \geq \frac{d'}{2} \geq \frac{1-\varepsilon}{5} \cdot \mathtt{OPT}$ since our earlier removal of items from $U_c$ ensured $d(u_{c, i}, g(u_{c, i})) \geq \frac{d'}{2}$ for each added $u_{c, i}$.
  Finally, after processing all groups in {Case \#2}, we get a set $S'$ that contains the same number of items from each group $c \in [C]$ as $S^*$, which implies that $S' \in \mathcal{F}$, and $div(S') \geq \frac{1-\varepsilon}{5} \cdot \mathtt{OPT}$.
  Therefore, we conclude that \textsf{FMMD-S} is a $\frac{1-\varepsilon}{5}$-approximation algorithm for FMMD.
  
  Since it takes $O(n k)$ time to compute $U$ as well as $U_c$ for each $c \in [C]$, the total time to compute $V'$ is $O(C k n)$ and $|V'| \leq Ck$. Then, the time to solve the ILP in Eq.~\ref{ilp:obj}--\ref{ilp:constr:4} for \textsf{FMMD-S} is $O(C^k)$ because there are at most $\binom{Ck}{k} = O(C^k)$ possible solutions to consider. Moreover, the number of iterations for $d'$ is $O( \log{\frac{1}{\varepsilon}})$ since the ratio between the first and last values of $d'$ is $O(1)$. Thus, the time complexity of \textsf{FMMD-S} is $O \big( C k n + C^{k} \log{\frac{1}{\varepsilon}} \big)$. When $k = o(\log{n})$ and $C = O(1)$, its time complexity is reduced to $O\big( n(k + \log{\frac{1}{\varepsilon}}) \big)$. Additionally, its space complexity is $O(n + C^2 k^2)$ since the number of edges in $G$ is $O(C^2 k^2)$.
\end{proof}

%!TEX root = main.tex
\section{Experimental Evaluation}
\label{sec:exp}

\subsection{Experimental Setup}
In this section, we conduct extensive experiments to evaluate the performance of our proposed algorithms, i.e., \textsf{FMMD-E} and \textsf{FMMD-S}.
We compare them with the state-of-the-art FMMD algorithms, including \textsf{FairSwap}, \textsf{FairFlow}, and \textsf{FairGMM} in~\cite{DBLP:conf/icdt/Moumoulidou0M21}, \textsf{FairGreedyFlow} in~\cite{DBLP:conf/icdt/Addanki0MM22}, and \textsf{SFDM1} and \textsf{SFDM2} in~\cite{DBLP:conf/icde/WangFM22}.
We implemented all the above algorithms in Python 3 using the NetworkX library for building and manipulating graphs and the Gurobi optimizer for solving ILPs.
All the experiments were carried out on a desktop with an Intel Core i5-9500 3.0GHz processor and 32GB RAM running Ubuntu 20.04.3 LTS. Each algorithm was run on a single thread.
All data and code are publicly available at \url{https://osf.io/te34m/}.

We use four public real-world datasets listed in Table~\ref{tab:data}, where $dim$ is the dimensionality of the feature vector.
The detailed information and preprocessing procedures on each dataset are described in Appendix~\ref{sec:datasets}.
The fairness constraints in our experiments are defined according to the \emph{proportional representation}~\cite{DBLP:conf/ijcai/CelisHV18,DBLP:conf/nips/HalabiMNTT20}: For each group $c \in [C]$, we set $l_c = \max(1, (1-\alpha) k \cdot \frac{|V_c|}{n})$ and $h_c = (1+\alpha) k \cdot \frac{|V_c|}{n}$ with $\alpha = 0.2$ in \textsf{FMMD-E} and \textsf{FMMD-S} and $k_c = \lceil k \cdot \frac{|V_c|}{n} \rceil$ or $\lfloor k \cdot \frac{|V_c|}{n} \rfloor$ in all other algorithms. All the algorithms were executed ten times in each experiment. We report the average running time and average diversity value of the solutions provided by each algorithm. We use `N/A' to indicate that an algorithm either does not find a solution within one day or does not work when $C>2$ (i.e.,~\textsf{FairSwap} and \textsf{SFDM1}).
In the preliminary experiments (see Appendix~\ref{sec:add:exp}), we find that the solution quality of FMMD-S hardly improves when $\varepsilon$ is decreased below $0.05$ and so we fix $\varepsilon = 0.05$ for FMMD-S in all the remaining experiments.

\begin{table}[tb]
\centering
\scriptsize
\caption{Statistics of datasets used in our experiments}
\label{tab:data}
\begin{tabular}{cccccc}
\toprule
\textbf{Dataset} & \textbf{Group} & $C$  & $n$ & $dim$ & \textbf{Distance Metric} \\
\midrule
\multirow{3}{*}{Adult} & Sex & 2 & \multirow{3}{*}{48,842} & \multirow{3}{*}{6} & \multirow{3}{*}{$l_2$-distance} \\
 & Race & 5 & & & \\
 & S+R & 10 & & & \\
\midrule
\multirow{3}{*}{CelebA} & Sex & 2 & \multirow{3}{*}{202,599} & \multirow{3}{*}{25,088} & \multirow{3}{*}{$l_1$-distance} \\
 & Age & 2 & & & \\
 & S+A & 4 & & & \\
\midrule
\multirow{3}{*}{Census} & Sex & 2 & \multirow{3}{*}{2,426,116} & \multirow{3}{*}{25} & \multirow{3}{*}{$l_1$-distance} \\
 & Age & 7 & & & \\
 & S+A & 14 & & & \\
\midrule
Twitter & Sex & 3 & 18,836 & 1,024 & Angular distance \\
\bottomrule
\end{tabular}
\end{table}

\begin{table}[tb]
\scriptsize
\centering
\caption{Diversity values of the solutions returned by different algorithms on small datasets for solution size $k=10$. The optimums $\mathtt{OPT}^*$ without fairness constraints are reported to show ``the price of fairness''.}
\label{tab:res:small}
\begin{tabular}{ccrrrrrrrrr}
\toprule
\textbf{Dataset} & \textbf{Group} & \textsf{FairSwap} & \textsf{FairFlow} & \textsf{FairGMM} & \textsf{FairGreedyFlow} & \textsf{SFDM1} & \textsf{SFDM2} & \textsf{FMMD-E} & \textsf{FMMD-S} & $\mathtt{OPT}^*$ \\
\midrule
\multirow{3}{*}{Adult}  & Sex  & 4.51     & 3.24    & 4.81     & 2.18    & 4.03     & 4.18    & \textbf{5.30}     & 4.64     & \multirow{3}{*}{5.30} \\
                        & Race & N/A      & 1.73    & N/A      & 1.33    & N/A      & 2.83    & \textbf{4.54}     & 4.01     & \\
                        & S+R  & N/A      & 0.84    & N/A      & 0.99    & N/A      & 2.04    & \textbf{3.12}     & 2.88     & \\
\midrule
\multirow{3}{*}{CelebA} & Sex  & 101457.3 & 55540.7 & 127354.6 & 46266.9 & 94873.3  & 93216.6 & \textbf{129818.2} & 106959.3 & \multirow{3}{*}{129871.5} \\
                        & Age  & 110098.1 & 54649.6 & 127871.2 & 46312.3 & 102762.9 & 91578.2 & \textbf{129871.5} & 116701.6 & \\
                        & S+A  & N/A      & 42412.7 & N/A      & 39967.2 & N/A      & 88026.7 & \textbf{127974.6} & 108055.3 & \\
\midrule
\multirow{2}{*}{Census} & Sex  & 28.4     & 15.8    & 29.8     & 14.7    & 27.2     & 28.0    & \textbf{34.0}     & 30.3     & \multirow{2}{*}{35.0} \\
                        & Age  & N/A      & 7.6     & N/A      & 9.3     & N/A      & 15.7    & \textbf{34.0}     & 30.3     & \\
\midrule
Twitter                 & Sex  & N/A      & 1.23    & 1.44     & 1.23    & N/A      & 1.39    & \textbf{1.51}     & 1.46     & 1.51 \\
\bottomrule
\end{tabular}
\end{table}

\begin{figure}[t]
  \centering
  \includegraphics[width=0.9\textwidth]{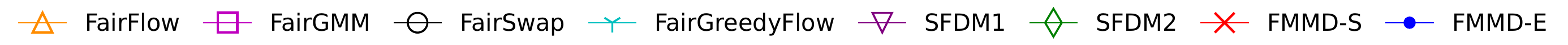}\\
  \subcaptionbox{Adult (Sex)}[.195\linewidth]{
    \includegraphics[height=1in]{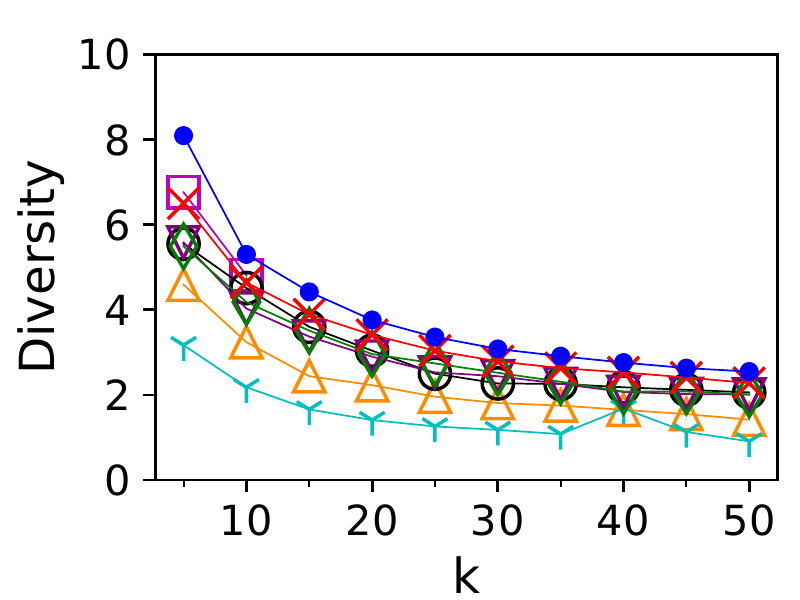}
  }
  \hfill
  \subcaptionbox{Adult (Race)}[.195\linewidth]{
    \includegraphics[height=1in]{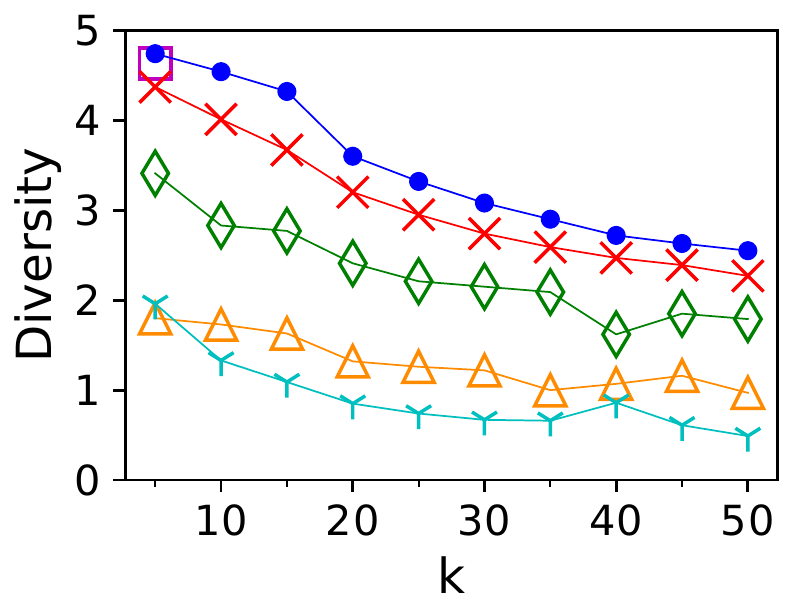}
  }
  \hfill
  \subcaptionbox{Adult (S+R)}[.195\linewidth]{
    \includegraphics[height=1in]{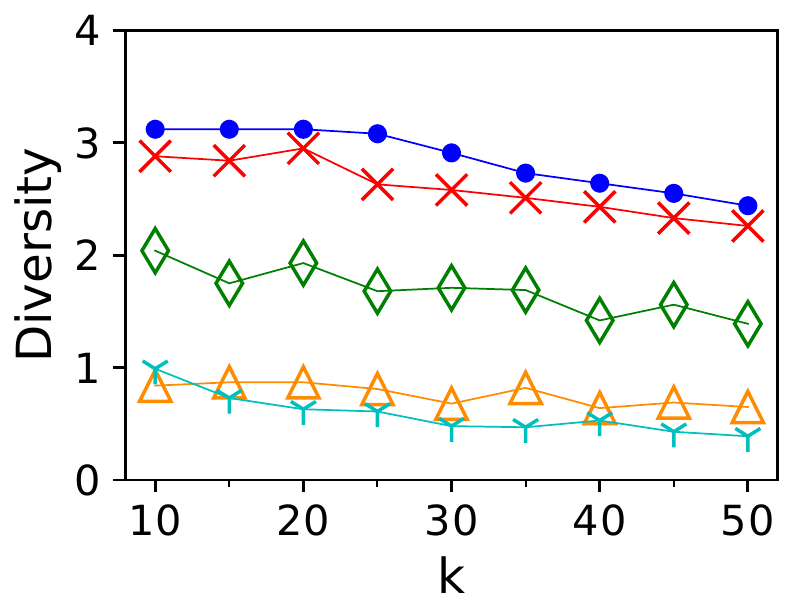}
  }
  \hfill
  \subcaptionbox{CelebA (Sex)}[.195\linewidth]{
    \includegraphics[height=1in]{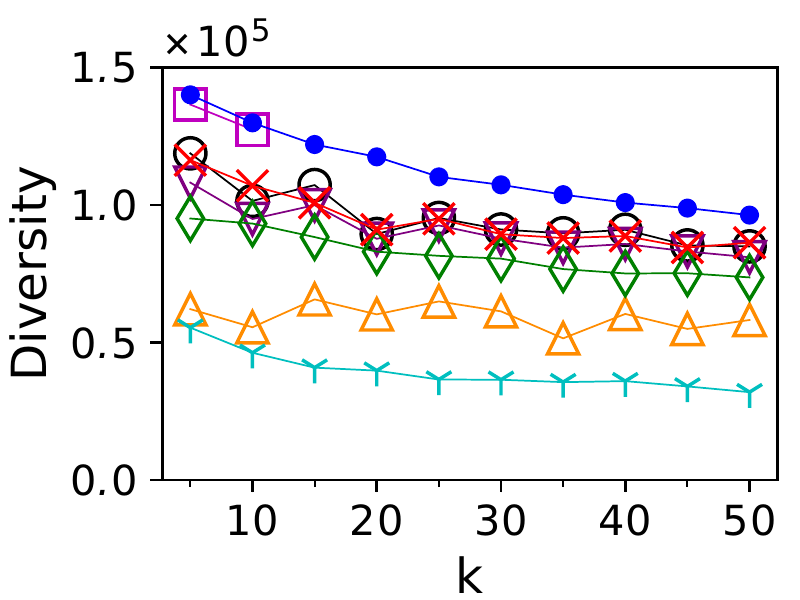}
  }
  \hfill
  \subcaptionbox{CelebA (Age)}[.195\linewidth]{
    \includegraphics[height=1in]{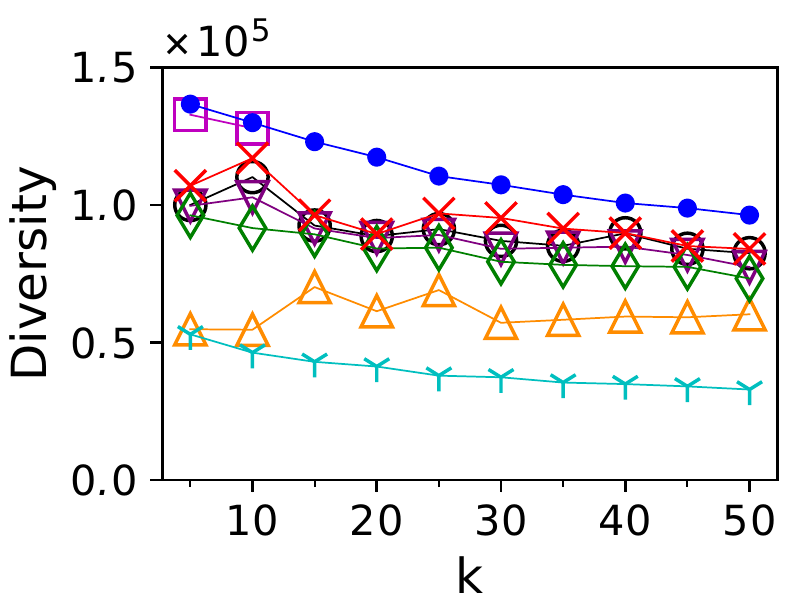}
  }
  \\
  \subcaptionbox{CelebA (S+A)}[.195\linewidth]{
    \includegraphics[height=1in]{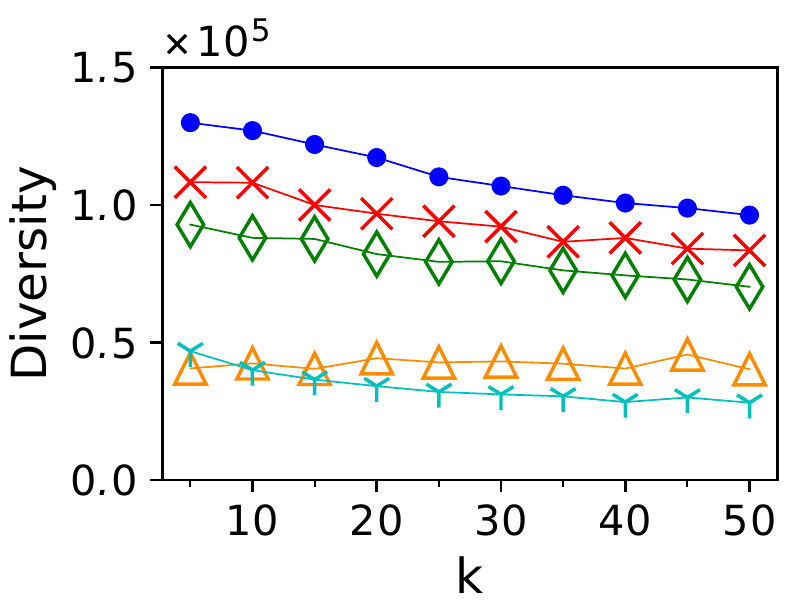}
  }
  \hfill
  \subcaptionbox{Census (Sex)}[.195\linewidth]{
    \includegraphics[height=1in]{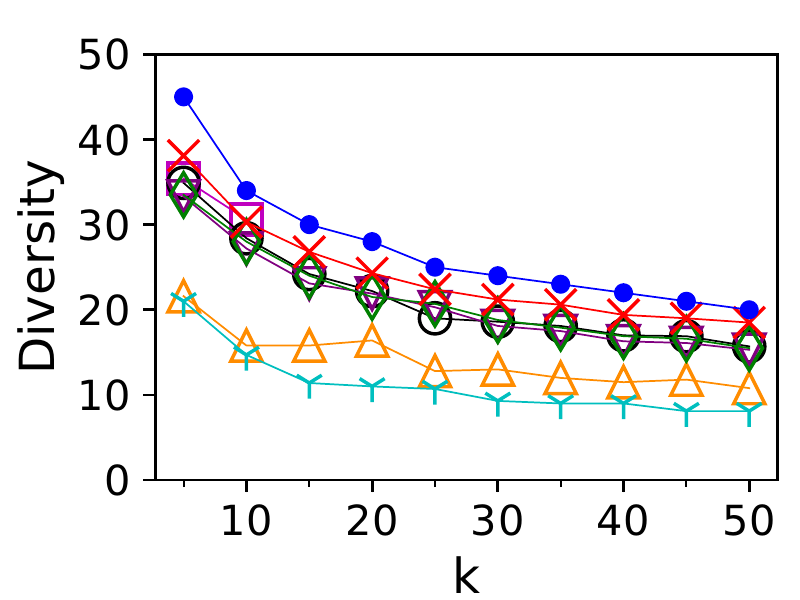}
  }
  \hfill
  \subcaptionbox{Census (Age)}[.195\linewidth]{
    \includegraphics[height=1in]{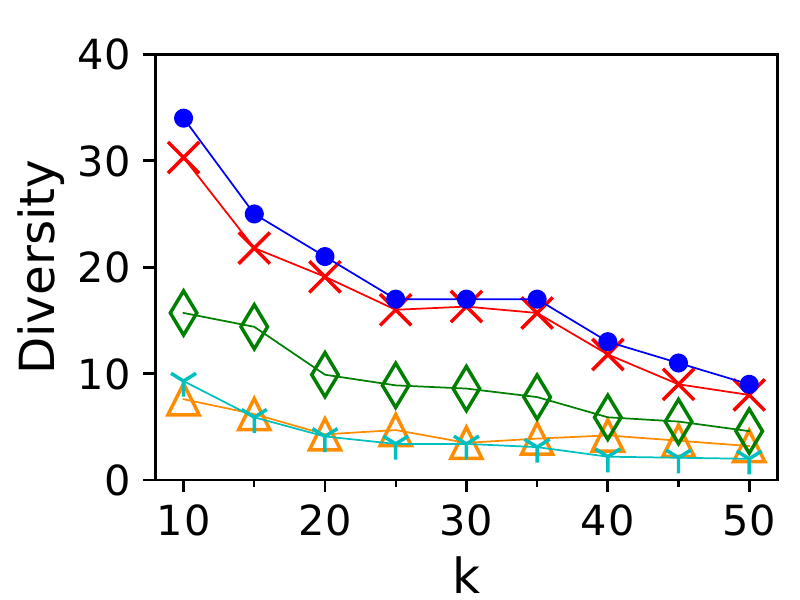}
  }
  \hfill
  \subcaptionbox{Census (S+A)}[.195\linewidth]{
    \includegraphics[height=1in]{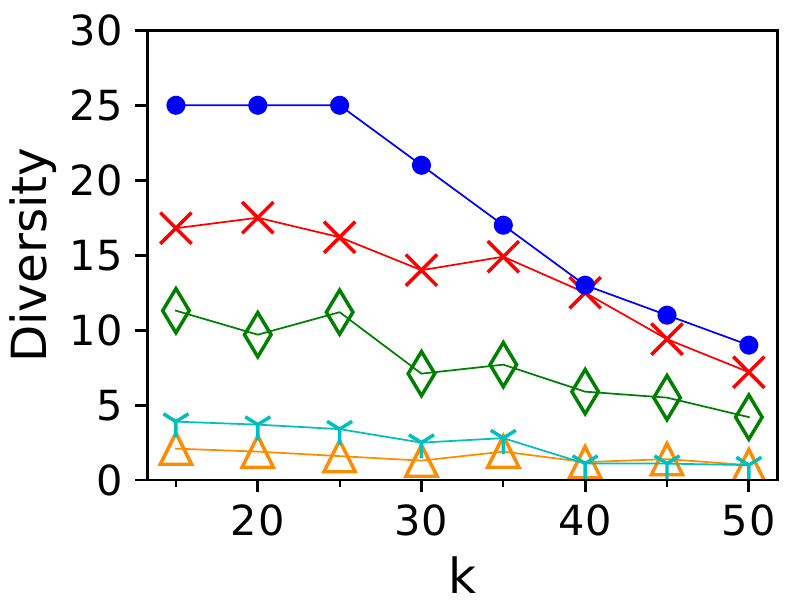}
  }
  \hfill
  \subcaptionbox{Twitter (Sex)}[.195\linewidth]{
    \includegraphics[height=1in]{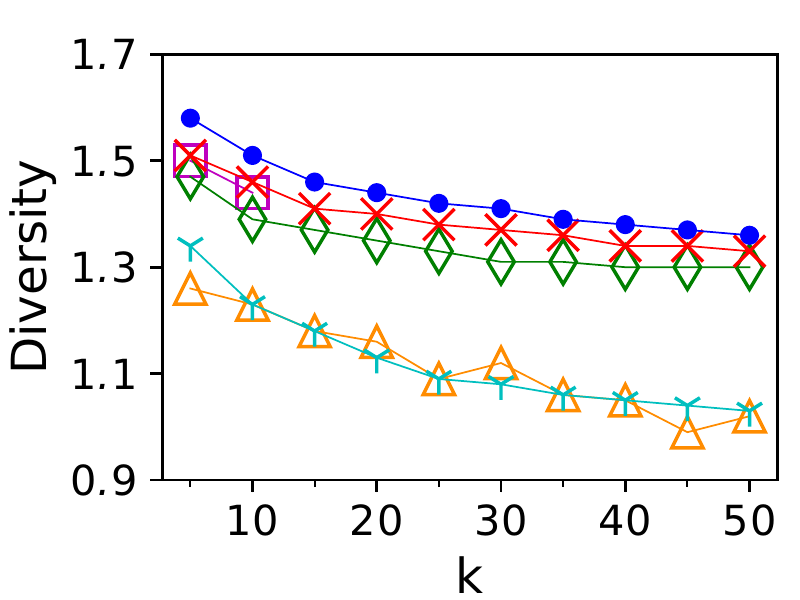}
  }
  \caption{Diversity values of the solutions of different algorithms with varying solution size $k$ on small datasets.}
  \label{fig:exp:div}
\end{figure}

\begin{figure}[t]
  \centering
  \includegraphics[width=0.9\textwidth]{figure/small-k/legend.pdf}\\
  \subcaptionbox{Adult (Sex)}[.195\linewidth]{
    \includegraphics[height=1in]{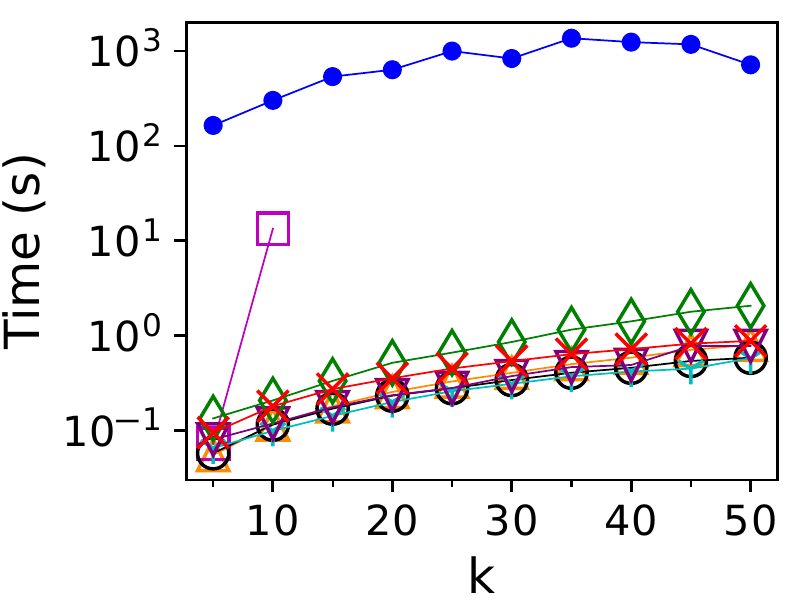}
  }
  \hfill
  \subcaptionbox{Adult (Race)}[.195\linewidth]{
    \includegraphics[height=1in]{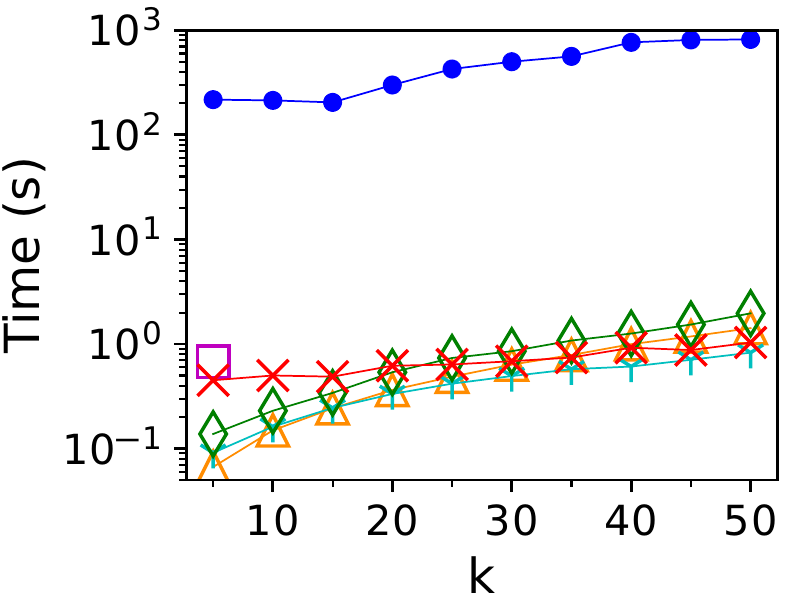}
  }
  \hfill
  \subcaptionbox{Adult (S+R)}[.195\linewidth]{
    \includegraphics[height=1in]{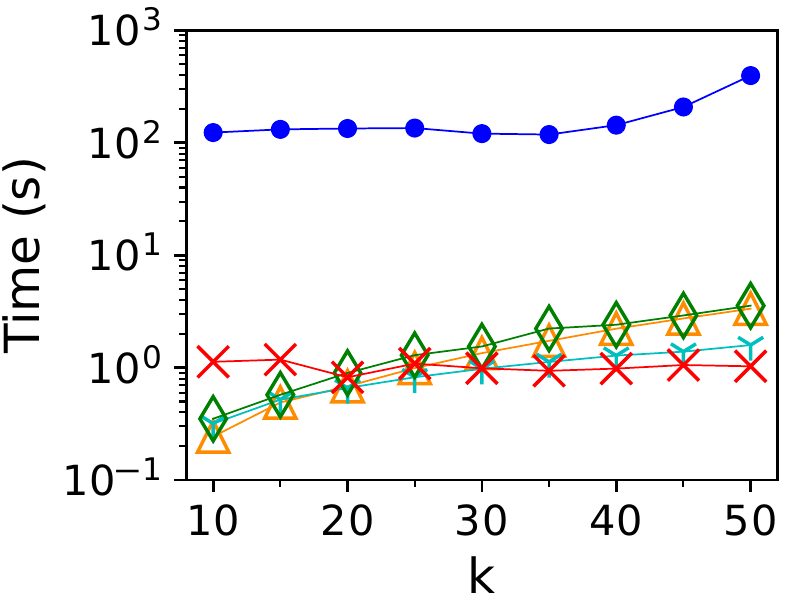}
  }
  \hfill
  \subcaptionbox{CelebA (Sex)}[.195\linewidth]{
    \includegraphics[height=1in]{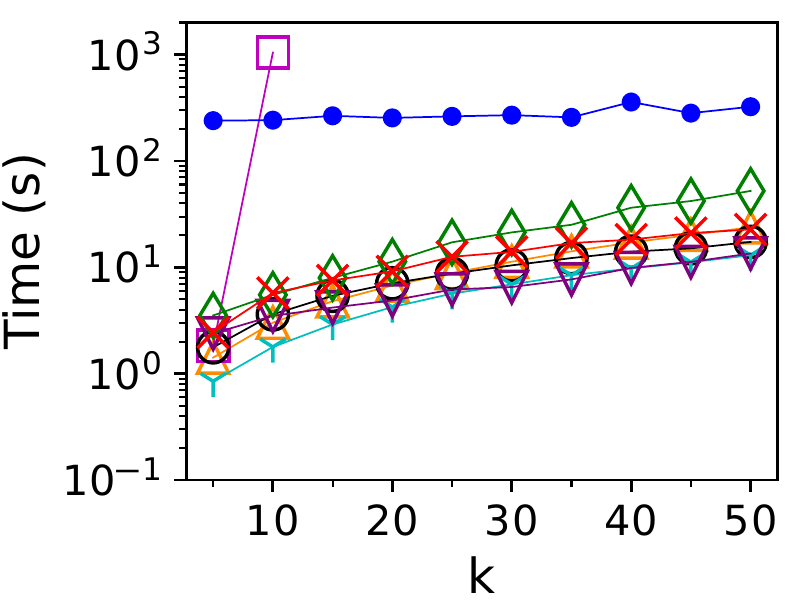}
  }
  \hfill
  \subcaptionbox{CelebA (Age)}[.195\linewidth]{
    \includegraphics[height=1in]{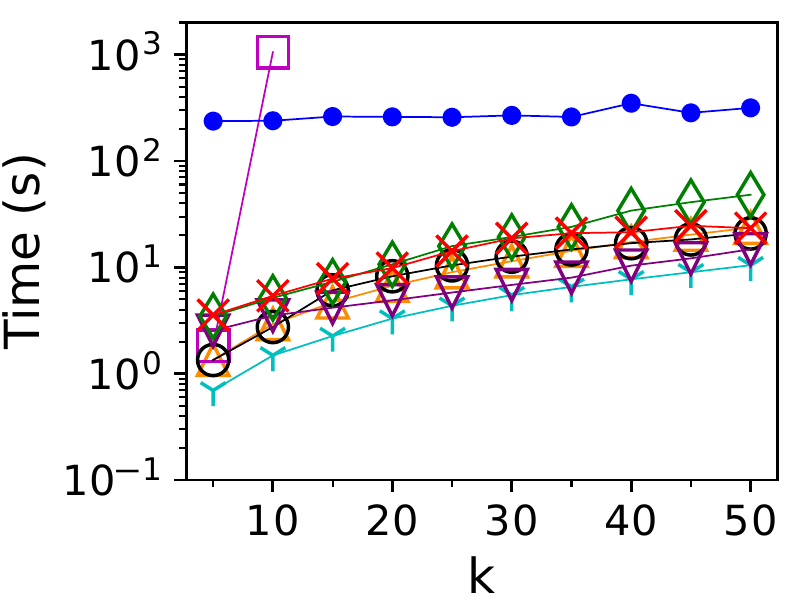}
  }
  \\
  \subcaptionbox{CelebA (S+A)}[.195\linewidth]{
    \includegraphics[height=1in]{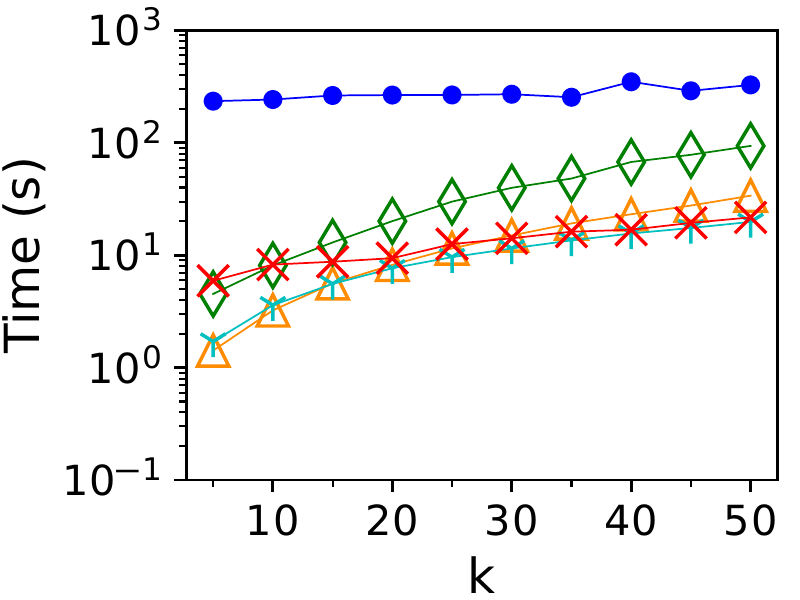}
  }
  \hfill
  \subcaptionbox{Census (Sex)}[.195\linewidth]{
    \includegraphics[height=1in]{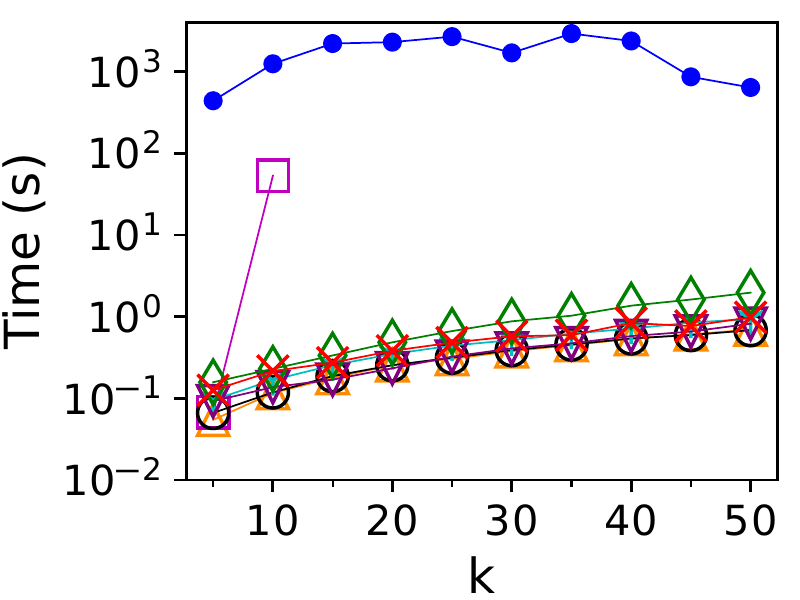}
  }
  \hfill
  \subcaptionbox{Census (Age)}[.195\linewidth]{
    \includegraphics[height=1in]{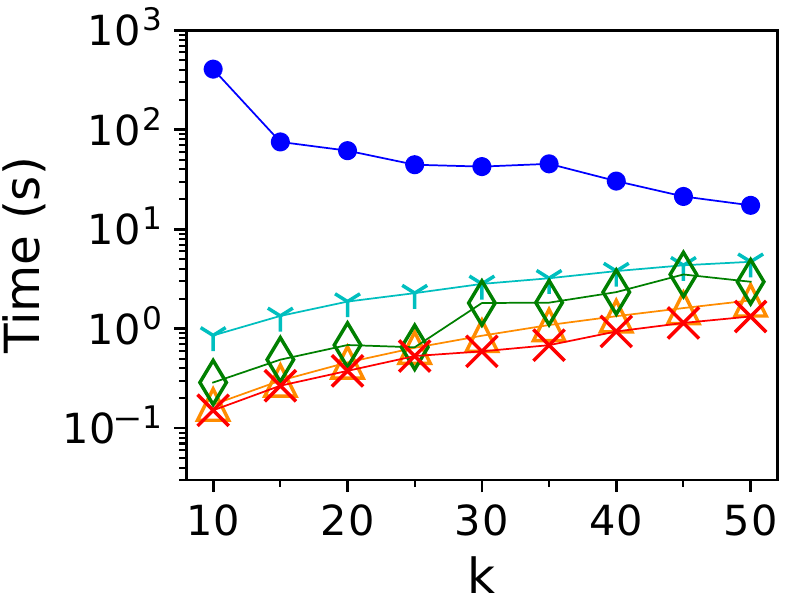}
  }
  \hfill
  \subcaptionbox{Census (S+A)}[.195\linewidth]{
    \includegraphics[height=1in]{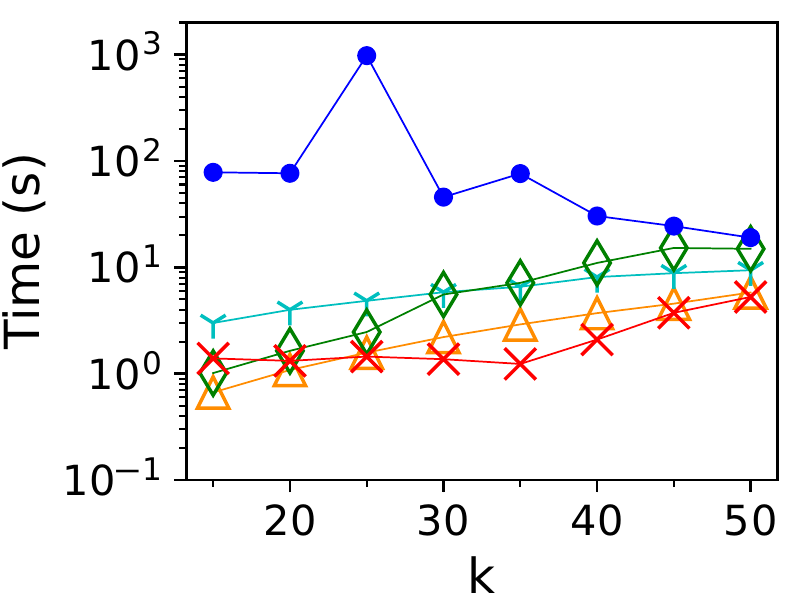}
  }
  \hfill
  \subcaptionbox{Twitter (Sex)}[.195\linewidth]{
    \includegraphics[height=1in]{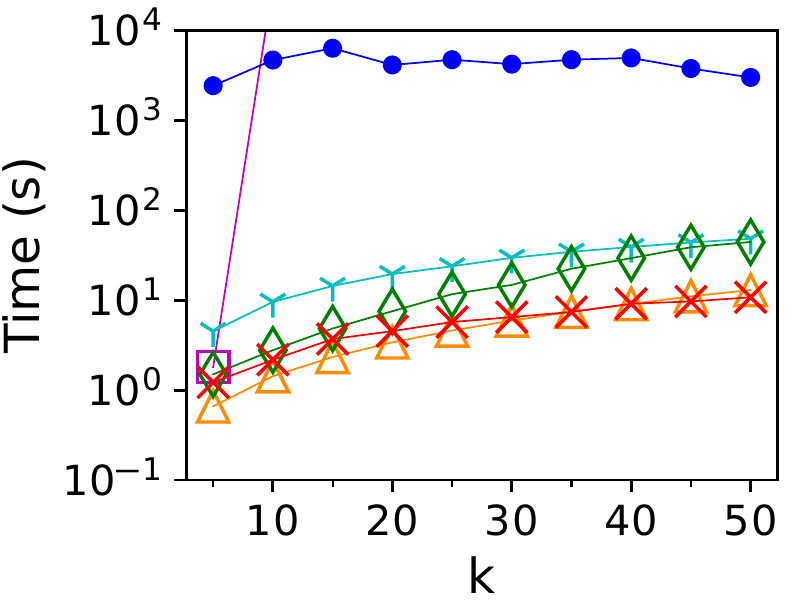}
  }
  \caption{Running time of different algorithms with varying solution size $k$ on small datasets.}
  \label{fig:exp:time}
\end{figure}

\begin{figure}[t]
  \centering
  \includegraphics[width=0.75\textwidth]{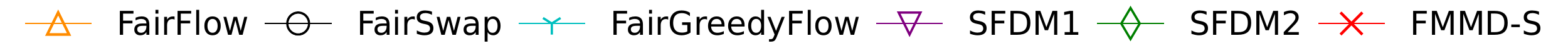}\\
  \subcaptionbox{Adult (Sex)}[.195\linewidth]{
    \includegraphics[height=1in]{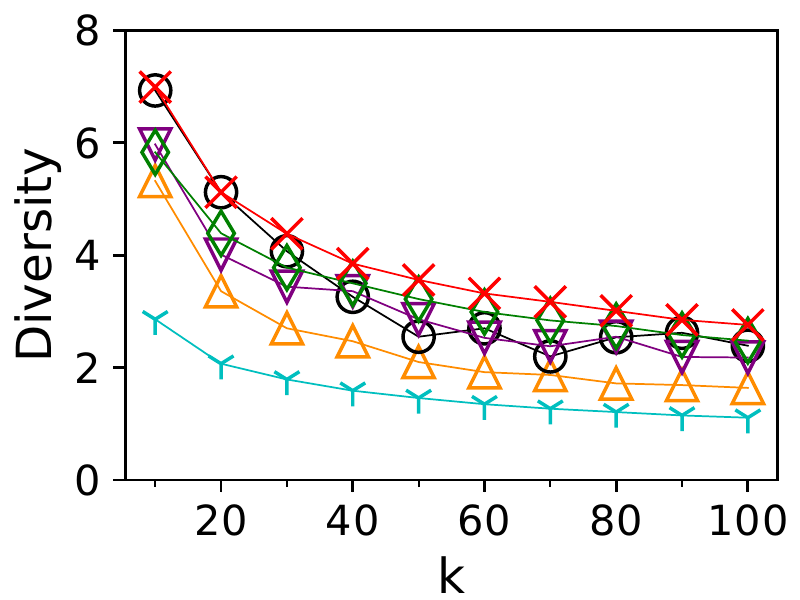}
  }
  \hfill
  \subcaptionbox{Adult (Race)}[.195\linewidth]{
    \includegraphics[height=1in]{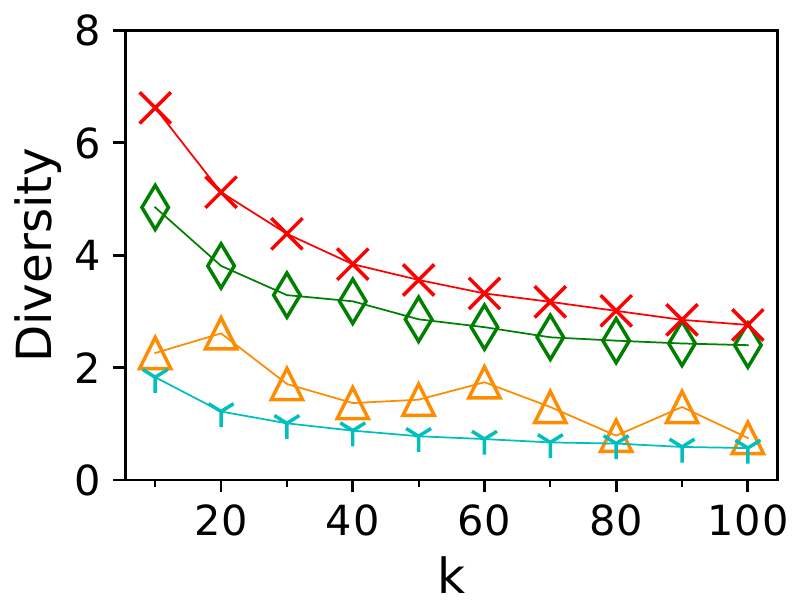}
  }
  \hfill
  \subcaptionbox{Adult (S+R)}[.195\linewidth]{
    \includegraphics[height=1in]{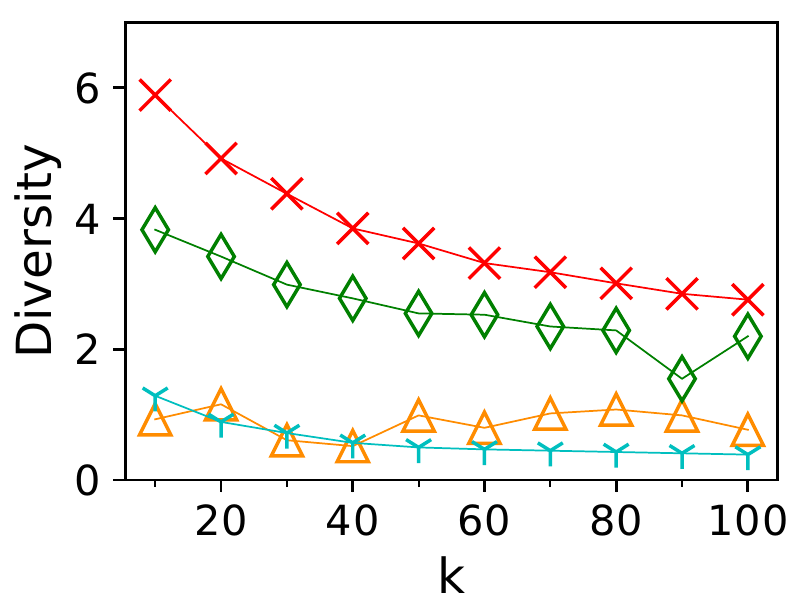}
  }
  \hfill
  \subcaptionbox{CelebA (Sex)}[.195\linewidth]{
    \includegraphics[height=1in]{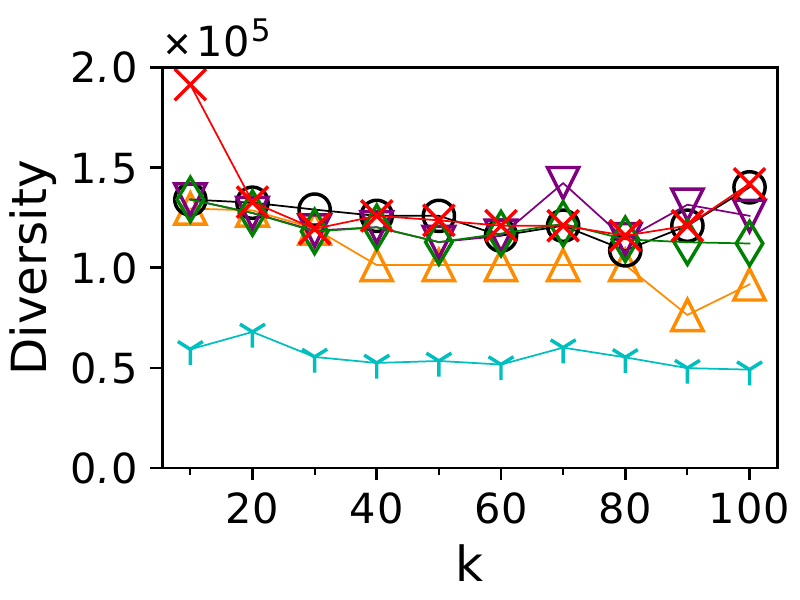}
  }
  \hfill
  \subcaptionbox{CelebA (Age)}[.195\linewidth]{
    \includegraphics[height=1in]{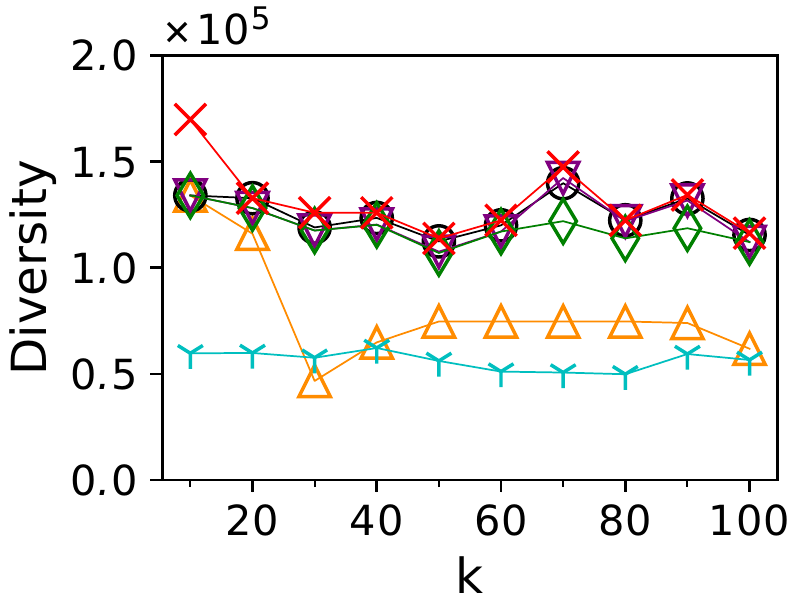}
  }
  \\
  \subcaptionbox{CelebA (S+A)}[.195\linewidth]{
    \includegraphics[height=1in]{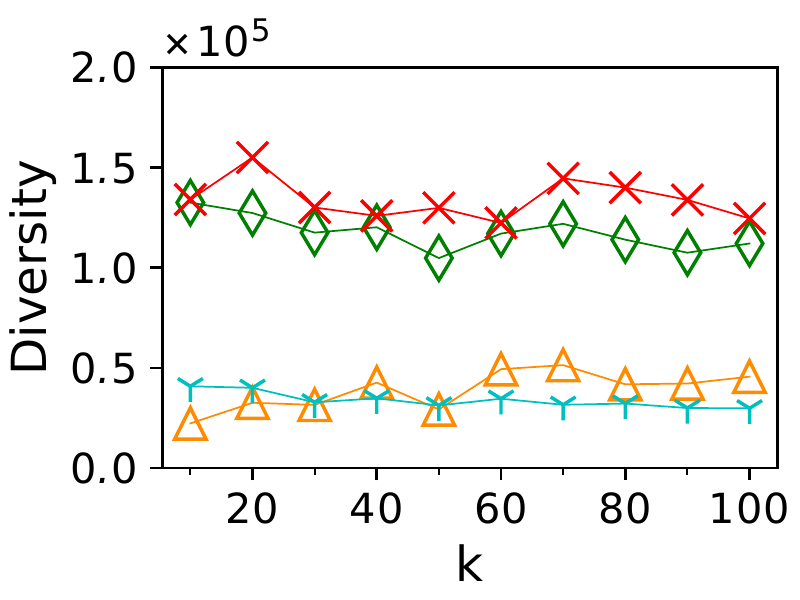}
  }
  \hfill
  \subcaptionbox{Census (Sex)}[.195\linewidth]{
    \includegraphics[height=1in]{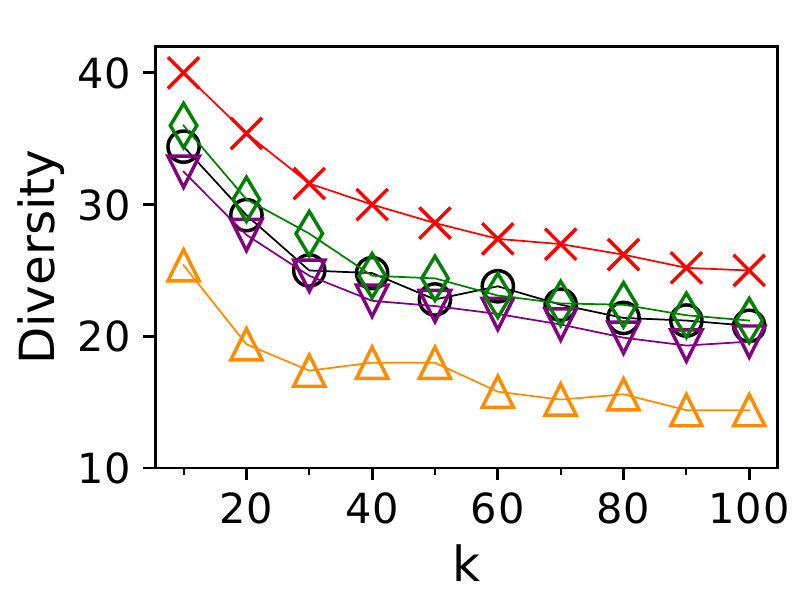}
  }
  \hfill
  \subcaptionbox{Census (Age)}[.195\linewidth]{
    \includegraphics[height=1in]{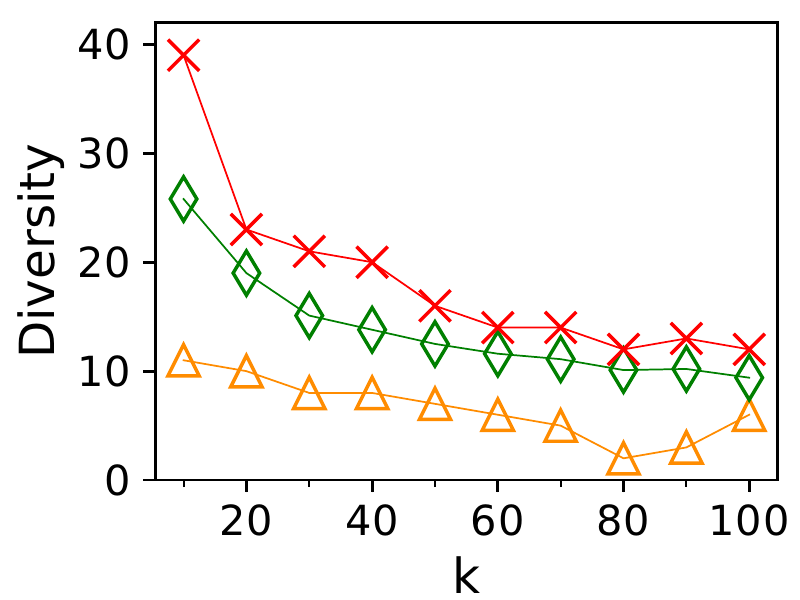}
  }
  \hfill
  \subcaptionbox{Census (S+A)}[.195\linewidth]{
    \includegraphics[height=1in]{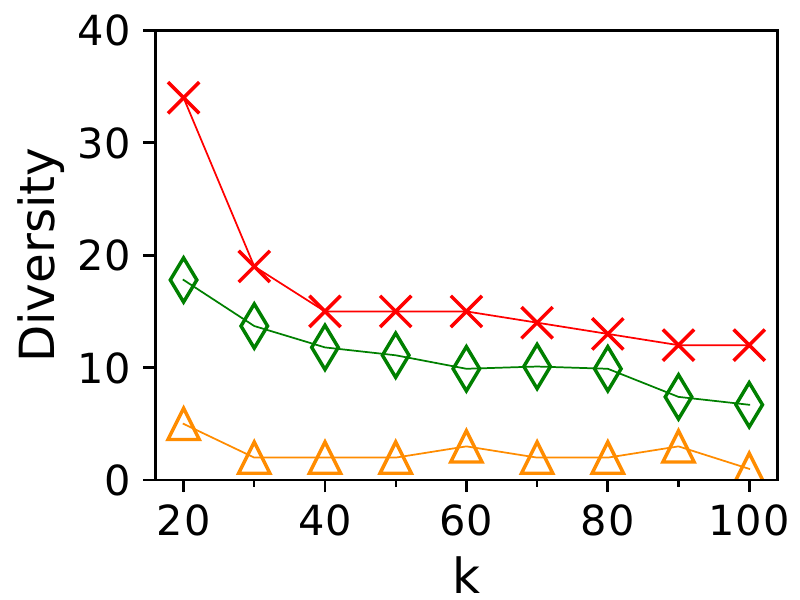}
  }
  \hfill
  \subcaptionbox{Twitter (Sex)}[.195\linewidth]{
    \includegraphics[height=1in]{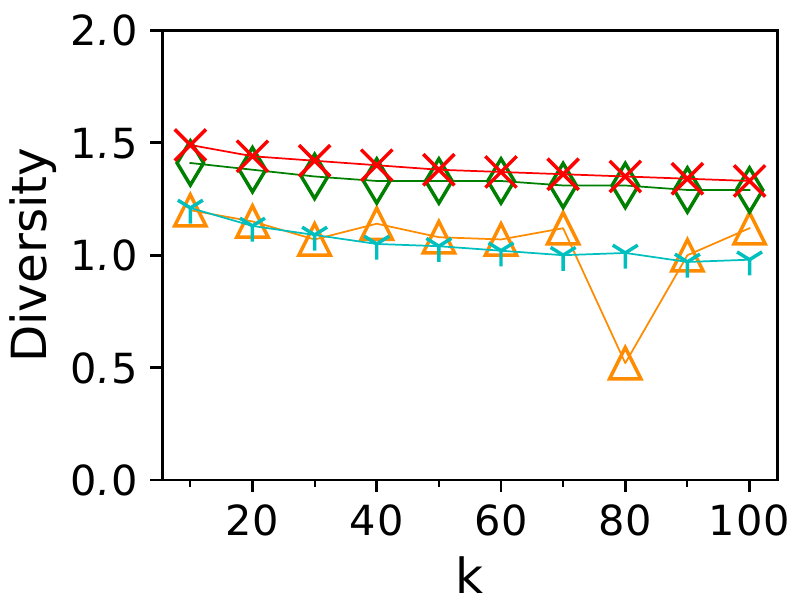}
  }
  \caption{Diversity values of the solutions of different algorithms with varying solution size $k$ on full datasets.}
  \label{fig:exp:full:div}
\end{figure}

\begin{figure}[t]
  \centering
  \includegraphics[width=0.75\textwidth]{figure/full-k/legend.pdf}\\
  \subcaptionbox{Adult (Sex)}[.195\linewidth]{
    \includegraphics[height=1in]{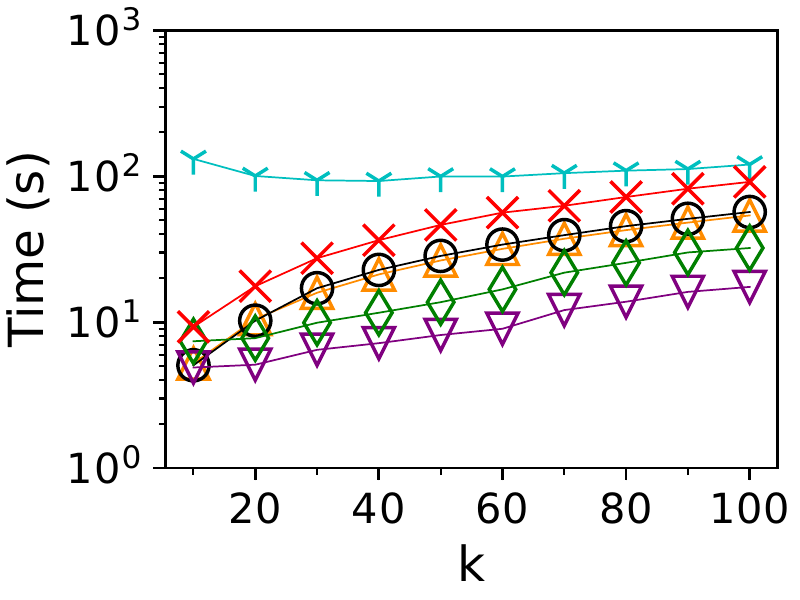}
  }
  \hfill
  \subcaptionbox{Adult (Race)}[.195\linewidth]{
    \includegraphics[height=1in]{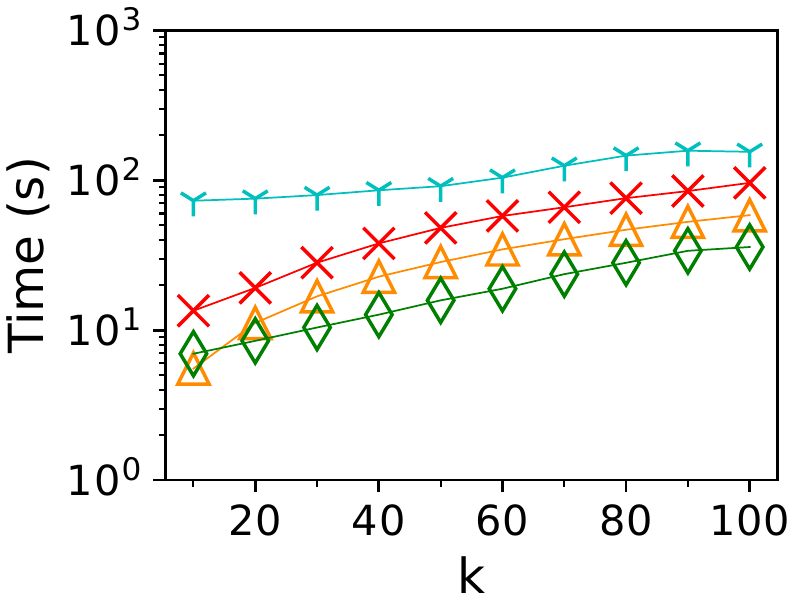}
  }
  \hfill
  \subcaptionbox{Adult (S+R)}[.195\linewidth]{
    \includegraphics[height=1in]{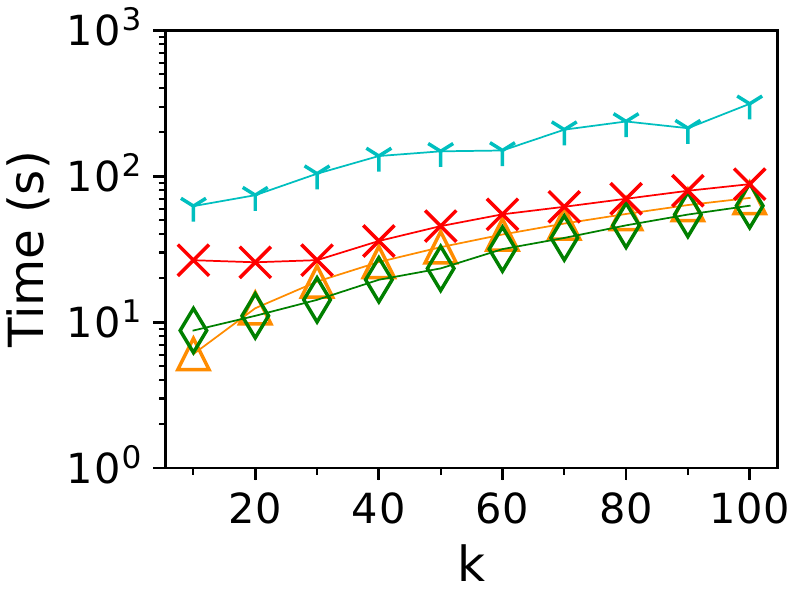}
  }
  \hfill
  \subcaptionbox{CelebA (Sex)}[.195\linewidth]{
    \includegraphics[height=1in]{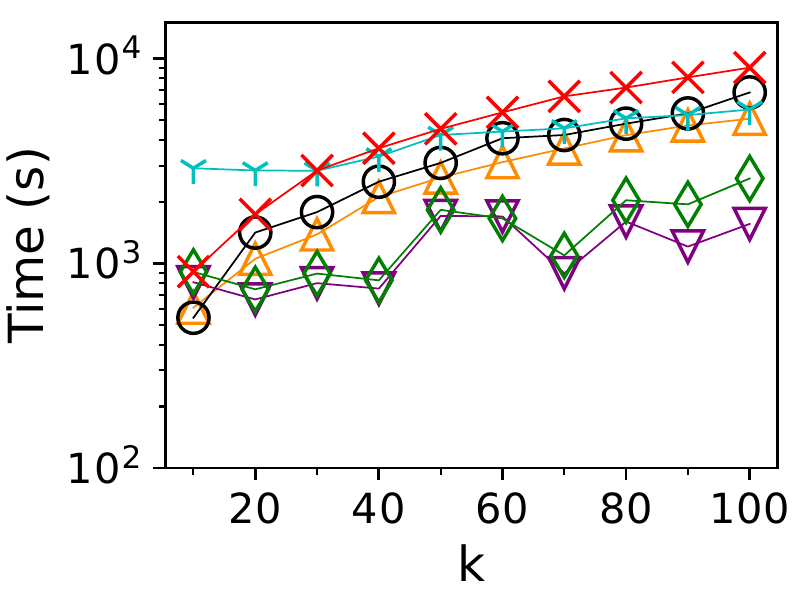}
  }
  \hfill
  \subcaptionbox{CelebA (Age)}[.195\linewidth]{
    \includegraphics[height=1in]{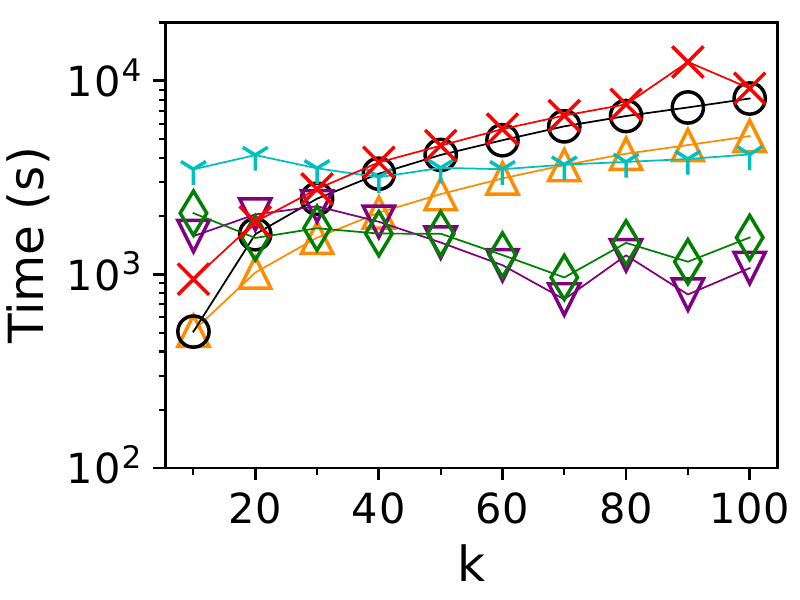}
  }
  \\
  \subcaptionbox{CelebA (S+A)}[.195\linewidth]{
    \includegraphics[height=1in]{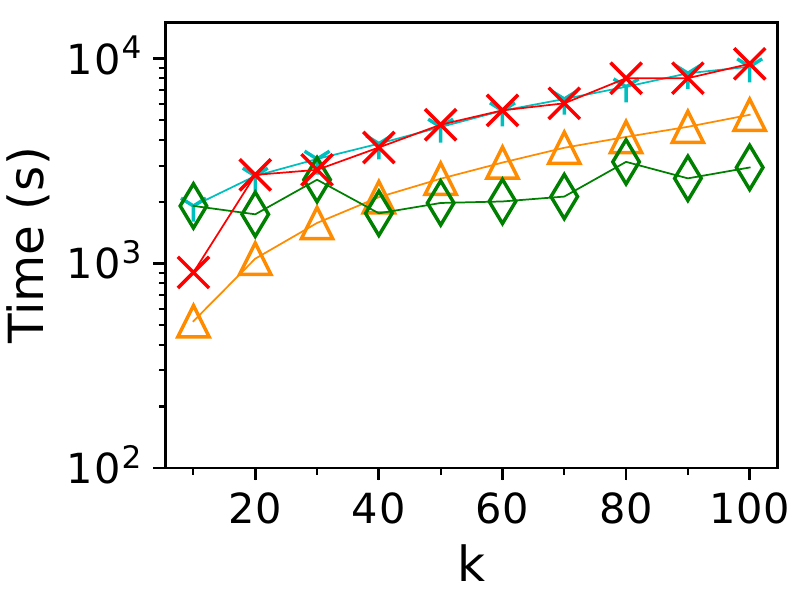}
  }
  \hfill
  \subcaptionbox{Census (Sex)}[.195\linewidth]{
    \includegraphics[height=1in]{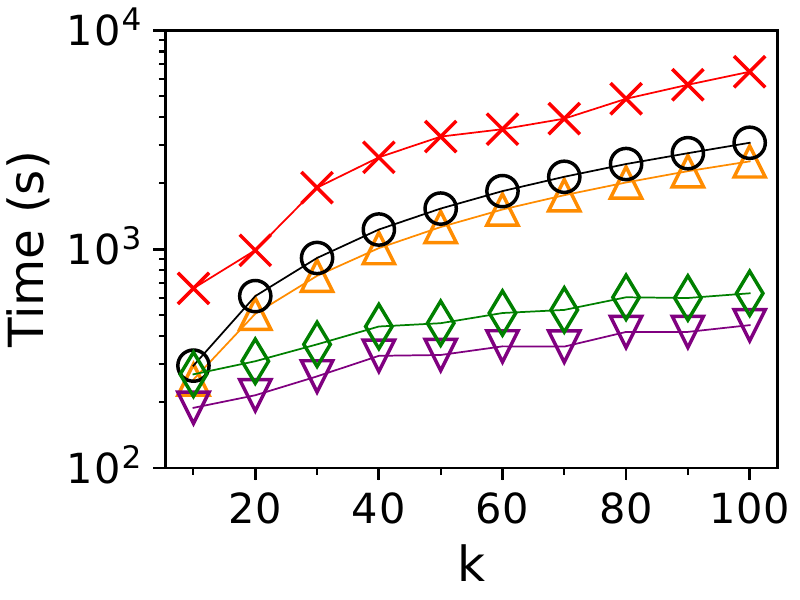}
  }
  \hfill
  \subcaptionbox{Census (Age)}[.195\linewidth]{
    \includegraphics[height=1in]{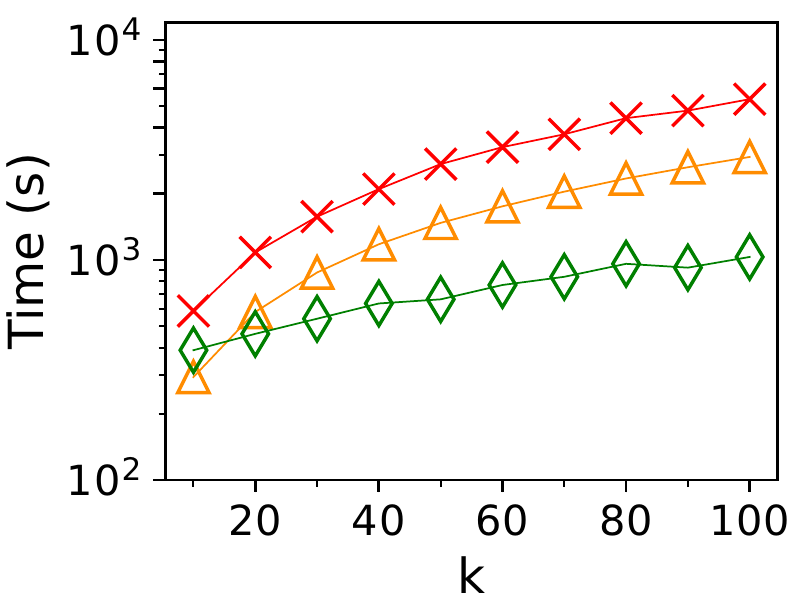}
  }
  \hfill
  \subcaptionbox{Census (S+A)}[.195\linewidth]{
    \includegraphics[height=1in]{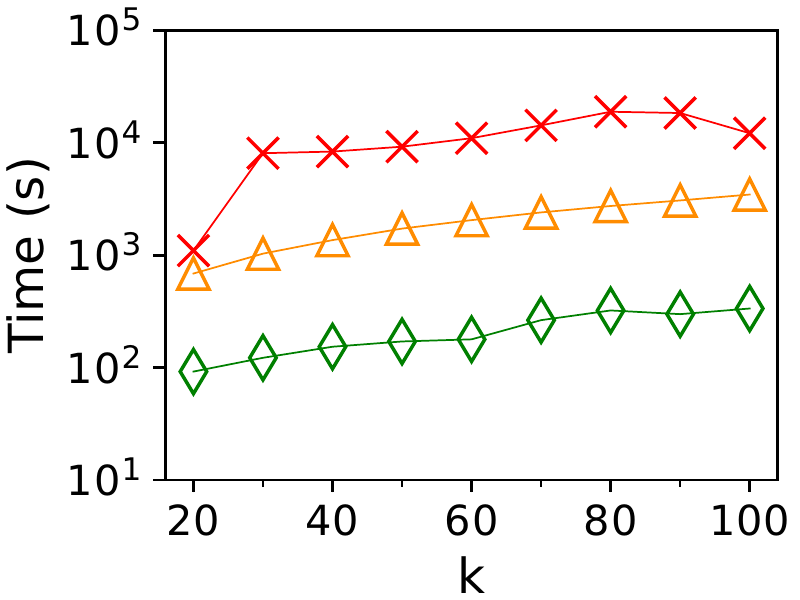}
  }
  \hfill
  \subcaptionbox{Twitter (Sex)}[.195\linewidth]{
    \includegraphics[height=1in]{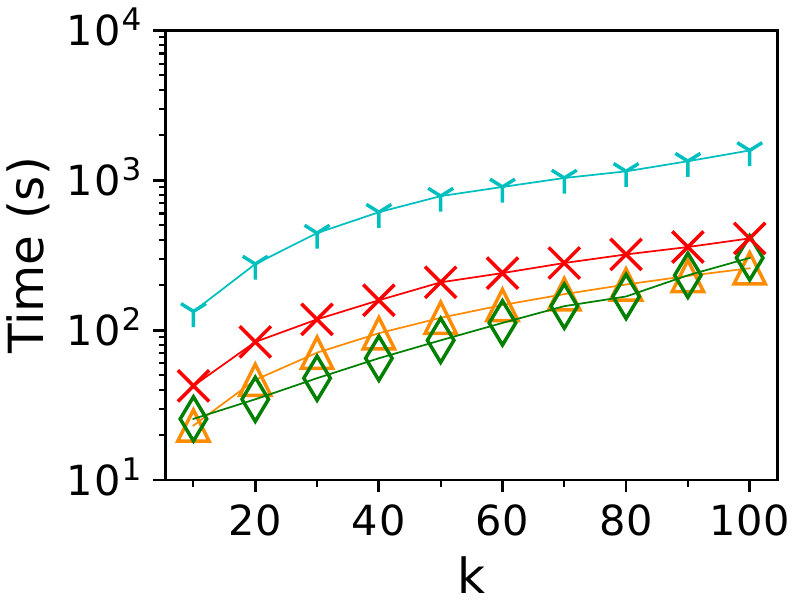}
  }
  \caption{Running time of different algorithms with varying solution size $k$ on full datasets.}
  \label{fig:exp:full:time}
\end{figure}

\subsection{Experimental Results}

Table~\ref{tab:res:small} shows the diversity achieved by different algorithms on ``small'' datasets, i.e., datasets that were obtained by sampling 1,000 items uniformly at random from each full dataset.
Figures~\ref{fig:exp:div}--\ref{fig:exp:time} illustrate the performance of different algorithms on small datasets with varying $k$.
Note that \textsf{FairSwap} and \textsf{SFDM1} are specific for the case of $C = 2$ and \textsf{FairGMM} fails to finish within one day when $C > 3$ or $k > 10$ since it has to enumerate $\binom{kC}{k}$ sets for solution computation.
They are ignored in subsequent tables and figures when they cannot provide valid solutions.

In general, \textsf{FMMD-E} always provides optimal solutions for FMMD within the time limit (i.e., 24 hours) when $n = 1,000$.
The \emph{price of fairness}, measured by the decrease in diversity due to the fairness constraints, is marginal on all datasets except \emph{Adult} with $C \geq 5$.
\textsf{FMMD-S} shows much higher solution quality (up to $3.9\times$ greater in diversity value) than all approximation algorithms except \textsf{FairGMM}.
Although \textsf{FairGMM} sometimes provides slightly better solutions than \textsf{FMMD-S}, it runs more than two orders of magnitude slower.
Moreover, the diversity values of all algorithms drop with $k$ because $div$ is a monotonically non-increasing function.
The running time is independent of $k$ for \textsf{FMMD-E}, grows exponentially with $k$ for \textsf{FairGMM}, and increases linearly with $k$ for \textsf{FMMD-S} and other algorithms, which all follow from their time complexities.

\begin{table}[tb]
\scriptsize
\setlength\tabcolsep{2pt}
\centering
\caption{Results of different algorithms on full datasets for solution size $k=50$. \textsf{FairGMM} and \textsf{FMMD-E} are omitted because they cannot provide any solution within the time limit (i.e., 24 hours).}
\label{tab:res:full}
\begin{tabular}{ccrrrrrrrrrrrr}
\toprule
\multirow{2}{*}{\textbf{Dataset}} & \multirow{2}{*}{\textbf{Group}} & \multicolumn{2}{c}{\textsf{FairSwap}} & \multicolumn{2}{c}{\textsf{FairFlow}} & \multicolumn{2}{c}{\textsf{FairGreedyFlow}} & \multicolumn{2}{c}{\textsf{SFDM1}} & \multicolumn{2}{c}{\textsf{SFDM2}} & \multicolumn{2}{c}{\textsf{FMMD-S}} \\
& & \textbf{diversity} & \textbf{time(s)} & \textbf{diversity} & \textbf{time(s)} & \textbf{diversity} & \textbf{time(s)} & \textbf{diversity} & \textbf{time(s)} & \textbf{diversity} & \textbf{time(s)} & \textbf{diversity} & \textbf{time(s)} \\
\midrule 
\multirow{3}{*}{Adult}  & Sex  & 2.55          & 28.45   & 2.10     & 26.47  & 1.46    & 99.61         & 2.86     & 8.17         & 3.22     & 13.71 & \textbf{3.56}     & 46.38  \\
                        & Race & \multicolumn{2}{c}{N/A} & 1.43     & 28.47  & 0.78    & 91.24         & \multicolumn{2}{c}{N/A} & 2.86     & 15.81 & \textbf{3.56}     & 48.10  \\
                        & S+R  & \multicolumn{2}{c}{N/A} & 0.99     & 32.64  & 0.50    & 148.12        & \multicolumn{2}{c}{N/A} & 2.55     & 23.35 & \textbf{3.61}     & 45.51  \\
\midrule
\multirow{3}{*}{CelebA} & Sex  & 125865.8      & 3093.9  & 101303.9 & 2624.1 & 57696.8 & 3558.0        & \textbf{128450.2} & 1473.4       & 117342.5 & 1354.4 & 123639.0 & 4536.4 \\
                        & Age  & 112387.6      & 4141.2  & 74679.4  & 2598.8 & 55470.0 & 2260.8        & \textbf{126446.9} & 1024.0       & 121056.9 & 1222.3 & 113798.5 & 4626.6 \\
                        & S+A  & \multicolumn{2}{c}{N/A} & 29278.4  & 2589.5 & 34066.4 & 3871.3        & \multicolumn{2}{c}{N/A} & 118598.7 & 1141.7 & \textbf{129866.3} & 4752.5 \\
\midrule
\multirow{3}{*}{Census} & Sex  & 22.8          & 1533.3  & 16.8     & 1250.4 & \multicolumn{2}{c}{N/A} & 22.3     & 328.7        & 24.4     & 459.14 & \textbf{28.6}     & 3268.8 \\
                        & Age  & \multicolumn{2}{c}{N/A} & 5.2      & 1461.5 & \multicolumn{2}{c}{N/A} & \multicolumn{2}{c}{N/A} & 12.5     & 662.76 & \textbf{16.0}     & 2728.2 \\
                        & S+A  & \multicolumn{2}{c}{N/A} & 3.4      & 1723.5 & \multicolumn{2}{c}{N/A} & \multicolumn{2}{c}{N/A} & 11.1     & 170.98 & \textbf{15.0}     & 9224.8 \\
\midrule
Twitter                 & Sex  & \multicolumn{2}{c}{N/A} & 1.08     & 120.85 & 1.04    & 783.7         & \multicolumn{2}{c}{N/A} & 1.33     & 85.56  & \textbf{1.38}     & 208.52 \\
\bottomrule
\end{tabular}
\end{table}

\begin{figure}[ht]
  \centering
  \includegraphics[width=0.9\textwidth]{figure/small-k/legend.pdf}
  \\
  \subcaptionbox{Synthetic ($n = 1,000$)}[.32\linewidth]{\centering\includegraphics[height=0.8in]{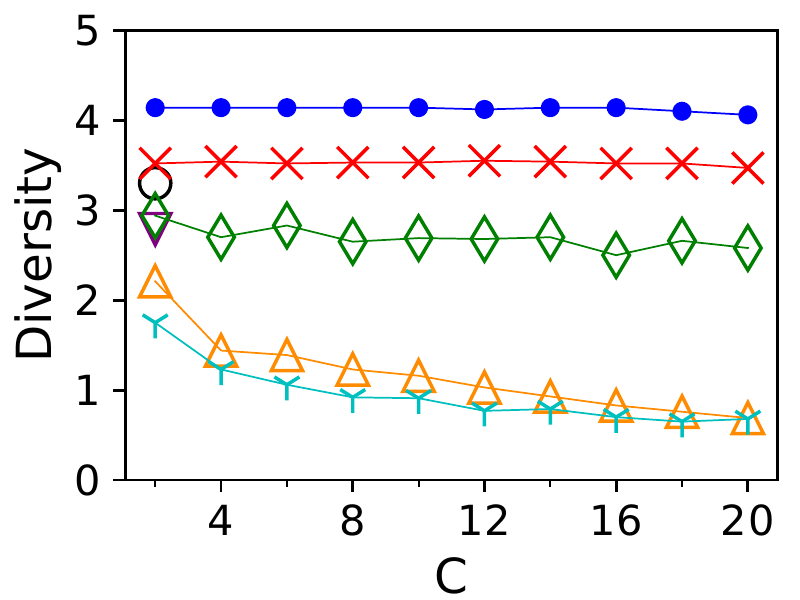}\includegraphics[height=0.8in]{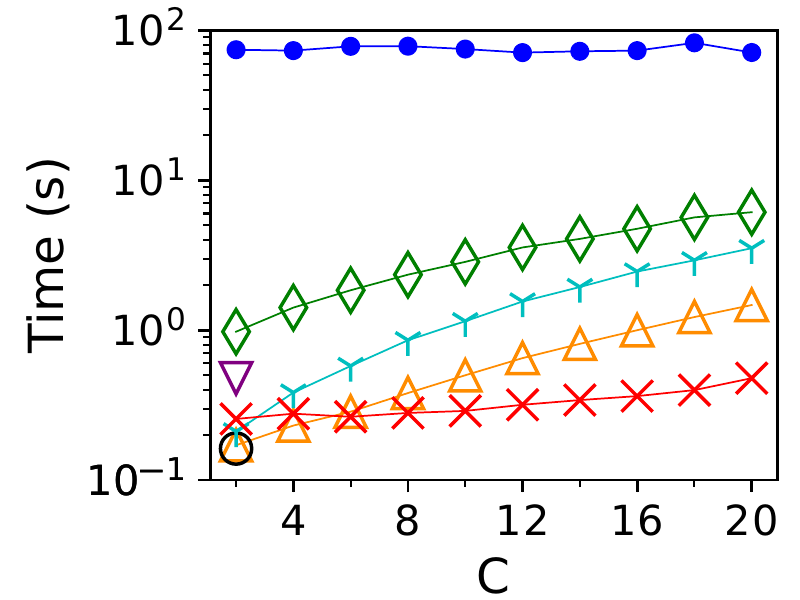}}
  \subcaptionbox{Synthetic ($C = 2$)}[.32\linewidth]{\centering\includegraphics[height=0.8in]{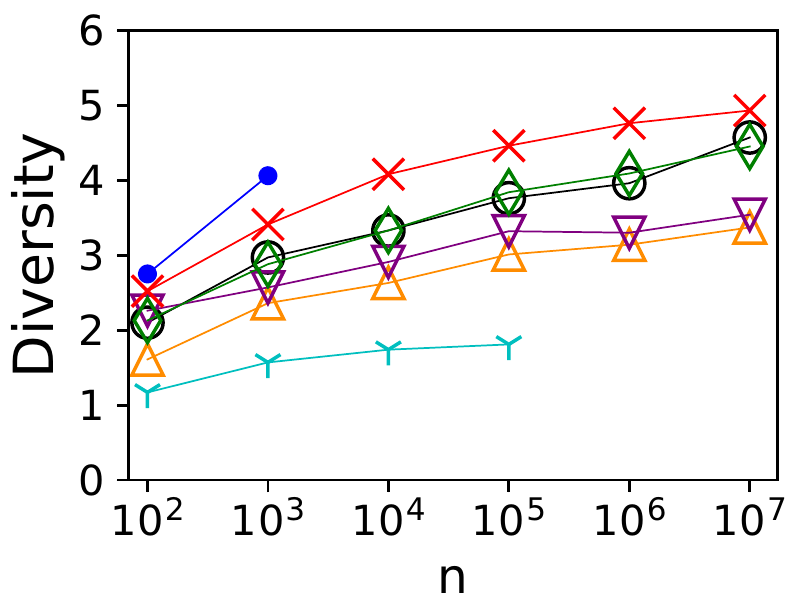}\includegraphics[height=0.8in]{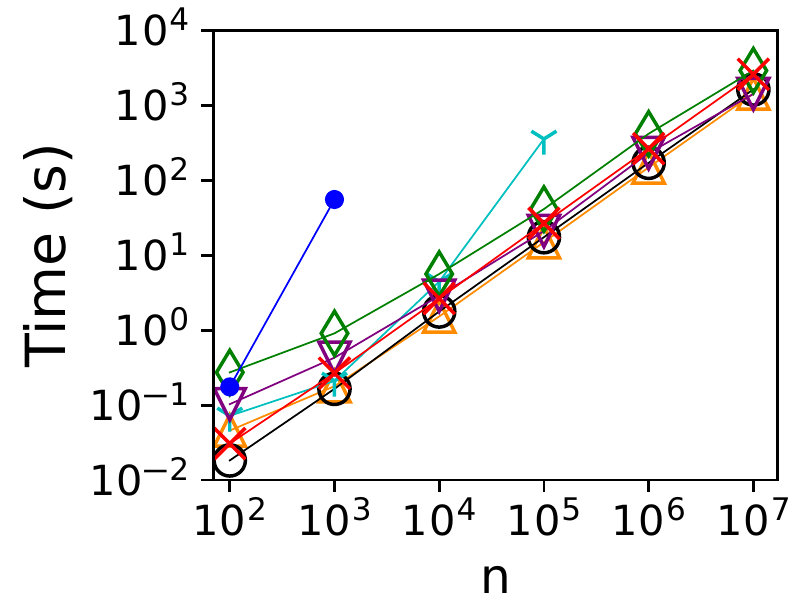}}
  \subcaptionbox{Synthetic ($C = 10$)}[.32\linewidth]{\centering\includegraphics[height=0.8in]{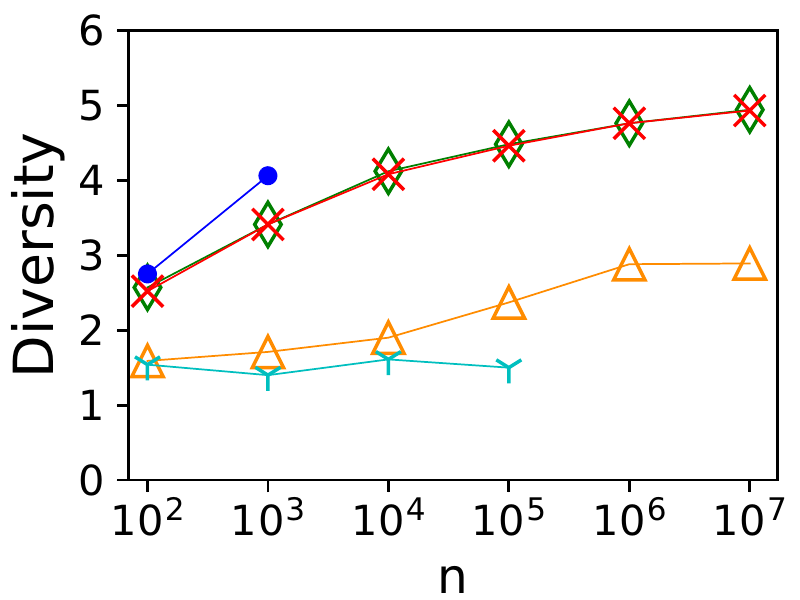}\includegraphics[height=0.8in]{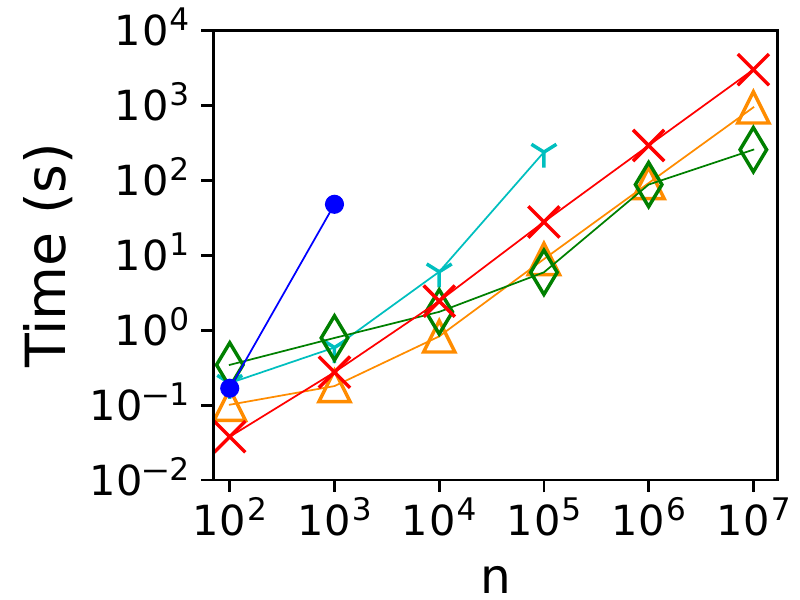}}
  \caption{Results of different algorithms by varying the number $C$ of groups and dataset size $n$ for solution size $k = 20$. Since \textsf{FMMD-E} does not provide any solution within the time limit on datasets with $n > 1,000$, \textsf{FairGMM} does not provide any solution within the time limit for $k = 20$, and \textsf{FairSwap} and \textsf{SFDM1} cannot work with $C > 2$, they are not plotted in these cases.}
  \label{fig:syn}
\end{figure}

Table~\ref{tab:res:full} presents the diversity values and running time of different algorithms for solution size $k=50$ on all the full datasets.
The performance of different algorithms by varying the solution size $k$ from $10$ to $100$ on full datasets (with all the items) is presented in Figures~\ref{fig:exp:full:div}--\ref{fig:exp:full:time}.
In general, the running time of all algorithms increases substantially with $n$ and $dim$. \textsf{FairGMM} and \textsf{FMMD-E} fail to finish within one day and thus are omitted from Table~\ref{tab:res:full}.
\textsf{FMMD-S} provides better solutions (up to $7.2\times$ higher in diversity value) than all the baselines in most cases.
The only exception is that \textsf{FMMD-S} shows slightly lower solution quality than \textsf{FairSwap} and \textsf{SFDM1} on \emph{CelebA} when $C = 2$.
This is because of the extremely high dimensionality of \emph{CelebA} ($d = 25,088$), where the distances between different pairs of points are less distinguishable.
In such cases, the thresholding method in \textsf{FMMD-S} is inferior to the local search methods in \textsf{FairSwap} and \textsf{SFDM1}.
Moreover, the time efficiency of \textsf{FMMD-S} is lower than the baselines, as solving ILPs is often time-consuming.
Nevertheless, on \emph{Census}, i.e., the largest dataset with more than two million items, \textsf{FMMD-S} still finishes the computation within 3 hours.

To evaluate the scalability of different algorithms, we vary the number $C$ of groups and the number $n$ of points on synthetic datasets.
In particular, each dataset consists of ten two-dimensional Gaussian isotropic blobs with random centers in $[-10, 10]^2$ and identity covariance matrices.
Each point is assigned to one of the $C$ groups uniformly at random.
The Euclidean distance is used as the distance metric.
For fixed $C = 2$ or $10$, we obtain six datasets with $n = 10^2, 10^3, \ldots, 10^7$; and for fixed $n = 1,000$, ten datasets with $C = 2, 4, \ldots, 20$.

The performance of different algorithms by varying $n$ and $C$ on synthetic datasets for solution size $k = 20$ is presented in Figure~\ref{fig:syn}.
In terms of solution quality, the diversity values are steady for \textsf{FMMD-E} and \textsf{FMMD-S} but significantly drop for all other algorithms when $C$ increases.
In terms of efficiency, the running time of \textsf{FMMD-E} is hardly affected by $C$.
Other algorithms run slower when $C$ is larger.
Nevertheless, \textsf{FMMD-S} runs faster than any other algorithm when $C \geq 8$, and its advantages in time efficiency become more significant with increasing $C$.
Finally, all algorithms' diversity values and running time grow with the dataset size $n$.
\textsf{FMMD-E} cannot scale to large datasets due to its exponential time complexity.
All in all, \textsf{FMMD-S} outperforms all the other approximation algorithms in terms of solution quality for different $C$ or $n$, and its advantages become more apparent when $C$ or $n$ is larger.
These results confirm the scalability of \textsf{FMMD-S} concerning the group size $C$ and dataset size $n$.

\section{Conclusion}
\label{sec:conclusion}

We investigated the problem of max-min diversification with fairness constraints (FMMD) in this paper. We proposed an exact ILP-based algorithm for this problem on small datasets. We further designed a scalable $\frac{1-\varepsilon}{5}$-approximation algorithm, where $\varepsilon \in (0,1)$, on massive datasets based on our exact algorithm and the notion of \emph{coresets}. Extensive experimental results on four real-world datasets confirmed the effectiveness, efficiency, and scalability of our proposed algorithms.

While a step forward in both theoretical and experimental aspects of the algorithms for the fair variant of diversity maximization, our work leaves many open problems for future exploration.
A natural question is whether there is any polynomial-time $O(1)$-approximation algorithm for FMMD, since our \textsf{FMMD-S} algorithm has an approximation factor of $\frac{1-\varepsilon}{5}$ but runs in polynomial time only when $C=O(1)$ and $k=o(\log{n})$, whereas the best-known polynomial-time algorithm in~\cite{DBLP:conf/icdt/Addanki0MM22} only achieves an approximation factor of $\frac{1}{C+1}$.
Moreover, it would also be interesting to study the fairness-aware variants of other diversity measures (e.g., the ones in~\cite{DBLP:conf/pods/IndykMMM14,DBLP:conf/nips/BhaskaraGMS16}).

%!TEX root = main.tex
\appendix

\section{Data Preparation}
\label{sec:datasets}

Detailed information about the four real-world datasets we use and the preprocessing procedures for them are presented as follows.
\begin{itemize}[nolistsep]
  \item \textbf{Adult} is retrieved from the UCI Machine Learning Repository\footnote{\url{archive.ics.uci.edu/ml/datasets/adult}}. It is a collection of 48,842 records from the 1994 US Census database. We select six numeric attributes as features and normalize each to have zero mean and unit standard deviation. The $l_2$-distance (Euclidean distance) is used as the distance metric. The groups are generated from two demographic attributes: \emph{sex} and \emph{race}. By using them individually and in combination, there are 2 (sex), 5 (race), and 10 (sex + race) groups, respectively.
  \item \textbf{CelebA} is provided by Liu \emph{et al.}~\cite{DBLP:conf/iccv/LiuLWT15} on the website\footnote{\url{mmlab.ie.cuhk.edu.hk/projects/CelebA.html}}. It is a set of 202,599 images of human faces. We get a 25,088-dimensional ($512 \times 7 \times 7$) feature vector for each image from the pre-trained VGG16 model in Keras\footnote{\url{https://keras.io/api/applications/vgg/}}. The $l_1$-distance (Manhattan distance) between feature vectors is used as the distance metric. We generate 2 groups from a human-annotated class label ``\emph{sex}'' \{`female', `male'\}, 2 groups from another human-annotated class label ``\emph{age}'' \{`young', `not young'\}, and 4 groups from both of them, respectively.
  \item \textbf{Census} is also retrieved from the UCI Machine Learning Repository\footnote{\url{archive.ics.uci.edu/ml/datasets/US+Census+Data+(1990)}}. It is a set of 2,426,116 records from the 1990 US Census data. We take 25 (normalized) numeric attributes as features and use the $l_1$-distance (Manhattan distance) as the distance metric. We generate 2, 7, and 14 groups from two demographic attributes \emph{sex}, \emph{age}, and both of them, respectively.
  \item \textbf{Twitter} is retrieved from Kaggle\footnote{\url{www.kaggle.com/crowdflower/twitter-user-gender-classification}}. It is a collection of 18,836 tweets with user profiles. We transform each tweet into a 1,024-dimensional feature vector using the sentence-transformer model\footnote{\url{huggingface.co/models?library=sentence-transformers}}~\cite{DBLP:conf/emnlp/ReimersG19} (``\textsf{bert-large-nli-stsb-mean-tokens}''). The angular distance is used as the distance metric. We generate 3 groups from the attribute ``\emph{sex}'' \{`female', `male', `non-human'\} in user profiles.
\end{itemize}
The proportion of each group on each dataset is provided in Table~\ref{tab:prop}.

\begin{table}[tb]
\centering
\scriptsize
\setlength\tabcolsep{3pt}
\caption{Proportion of each group on each dataset.}
\label{tab:prop}
\begin{tabular}{ccl}
\toprule
\textbf{Dataset}        & \textbf{Group} & \textbf{Proportion of Each Group} \\
\midrule
\multirow{3}{*}{Adult}  & Sex            & `Female': 33.2\%, `Male': 66.8\% \\
                        & Race           & \begin{tabular}[l]{@{}l@{}}`Amer-Indian-Eskimo': 1.0\%, `Asian-Pac-Islander': 3.1\%, `Black': 9.6\%,\\ `Others': 0.8\%, `White': 85.5\%\end{tabular} \\
                        & Sex+Race       & \begin{tabular}[l]{@{}l@{}}`AIE+F': 0.4\%, `API+F': 1.1\%, `B+F': 4.7\%, `O+F': 0.3\%, `W+F': 26.7\%,\\ `AIE+M': 0.6\%, `API+M': 2.0\%, `B+M': 4.9\%, `O+M': 0.5\%, `W+M': 58.8\%\end{tabular} \\
\midrule
\multirow{3}{*}{CelebA} & Sex            & `Female': 58.3\%, `Male': 41.7\% \\
                        & Age            & `Not Young': 22.6\%, `Young': 77.4\% \\
                        & Sex+Age        & `NY+F': 7.3\%, `Y+F': 51.0\%, `NY+M': 15.3\%, `Y+M': 26.4\% \\
\midrule
\multirow{3}{*}{Census} & Sex            & `Female' : 51.6\%, `Male' : 48.4\% \\
                        & Age            & \begin{tabular}[l]{@{}l@{}}`A1 (12-)': 18.2\%, `A2 (13-19)': 10.0\%, `A3 (20-29)': 15.3\%,\\ `A4 (30-39)': 16.7\%, `A5 (40-49)': 12.9\%, `A6 (50-64)': 13.6\%,\\ `A7 (65+)': 13.3\%\end{tabular} \\
                        & Sex+Age        & \begin{tabular}[l]{@{}l@{}}`A1+F': 8.9\%, `A2+F': 4.9\%, `A3+F': 7.7\%, `A4+F': 8.5\%, `A5+F': 6.6\%,\\ `A6+F': 7.2\%, `A7+F': 7.9\%, `A1+M': 9.3\%, `A2+M': 5.1\%, `A3+M': 7.6\%,\\ `A4+M': 8.2\%, `A5+M': 6.3\%, `A6+M': 6.5\%, `A7+M': 5.4\%\end{tabular} \\
\midrule
Twitter                 & Sex            & `Female': 35.6\%, `Male': 32.9\%, `Non-Human': 31.5\% \\
\bottomrule
\end{tabular}
\end{table}
\begin{figure}[tb]
  \centering
  \subcaptionbox{Adult (Sex)}[.325\linewidth]{\includegraphics[width=1.1in]{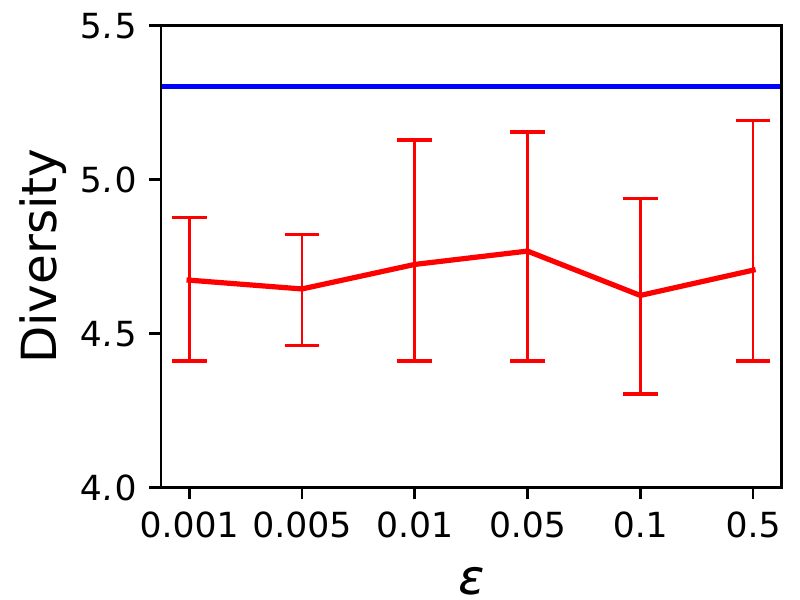}\hfill\includegraphics[width=1.1in]{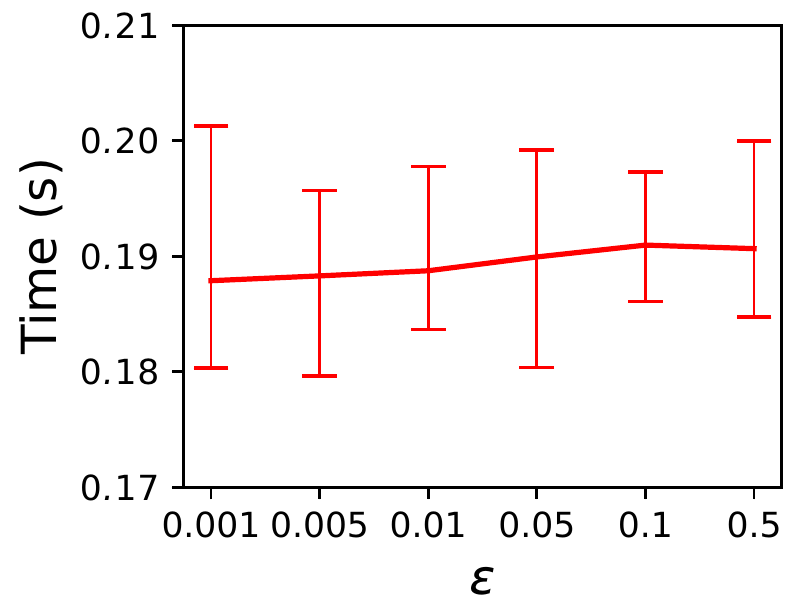}}
  \hfill
  \subcaptionbox{Adult (Race)}[.325\linewidth]{\includegraphics[width=1.1in]{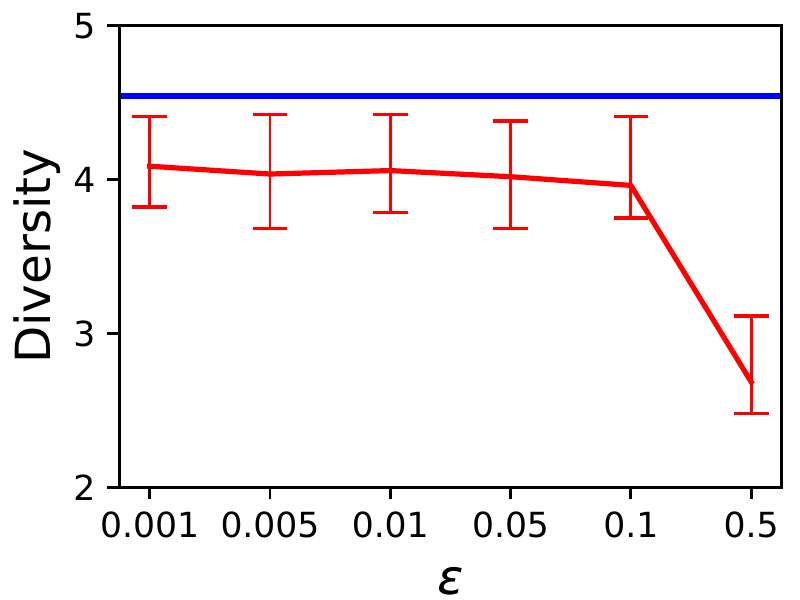}\hfill\includegraphics[width=1.1in]{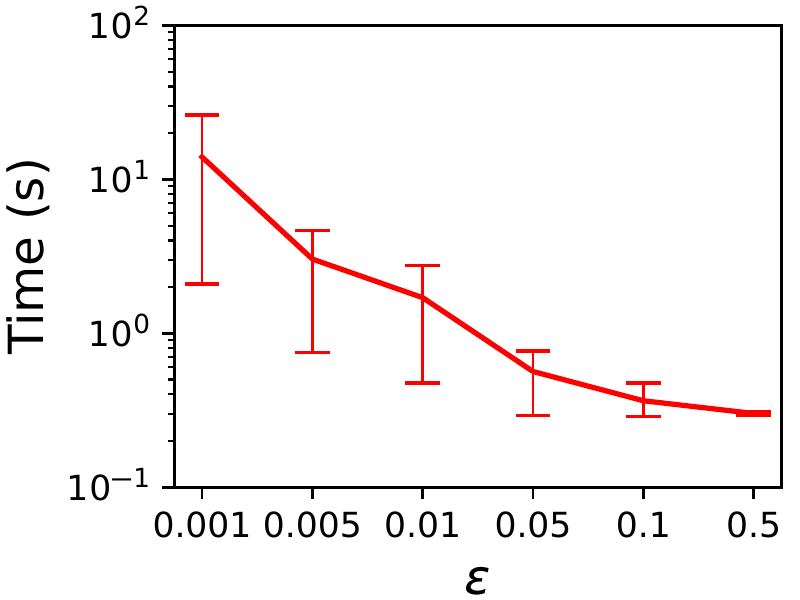}}
  \hfill
  \subcaptionbox{Adult (S+R)}[.325\linewidth]{\includegraphics[width=1.1in]{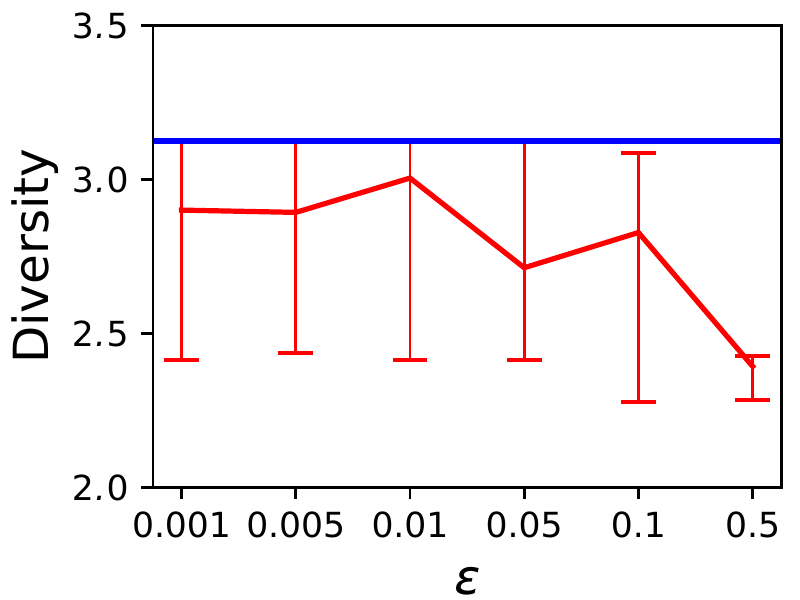}\hfill\includegraphics[width=1.1in]{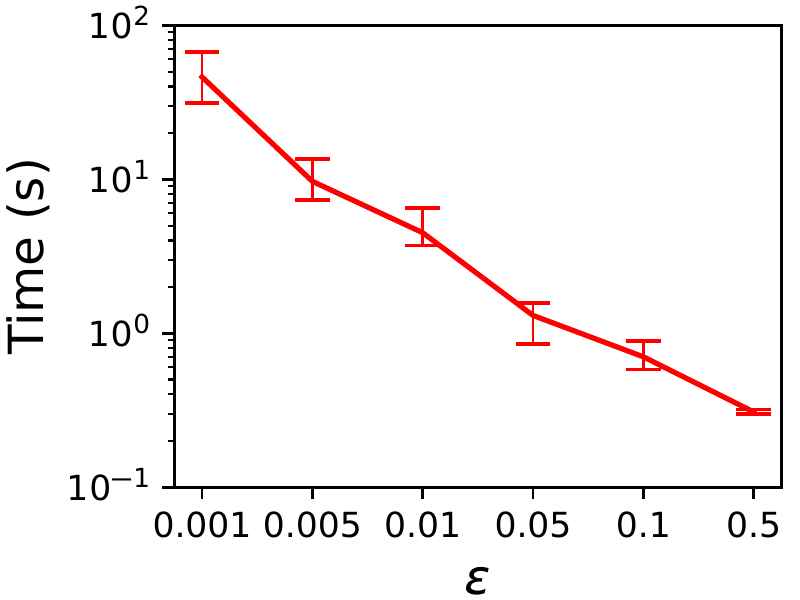}}
  \\
  \subcaptionbox{CelebA (Sex)}[.325\linewidth]{\includegraphics[width=1.1in]{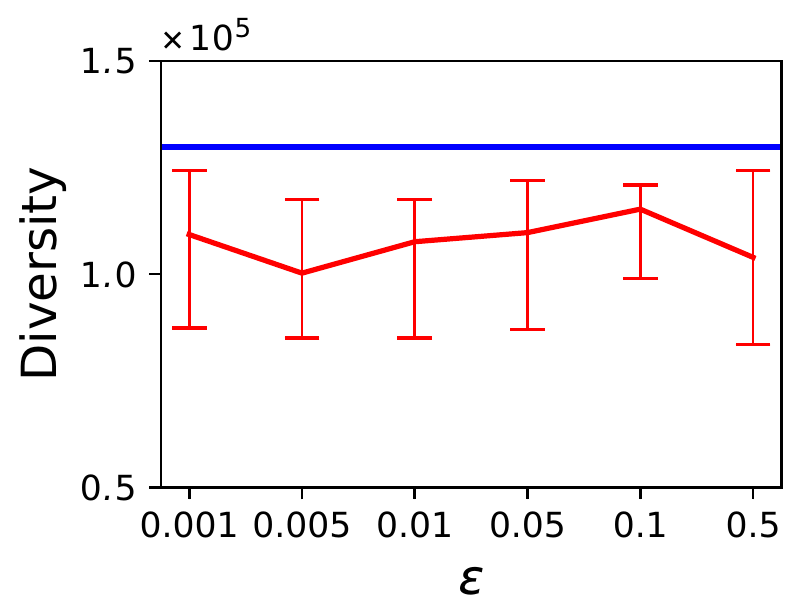}\hfill\includegraphics[width=1.1in]{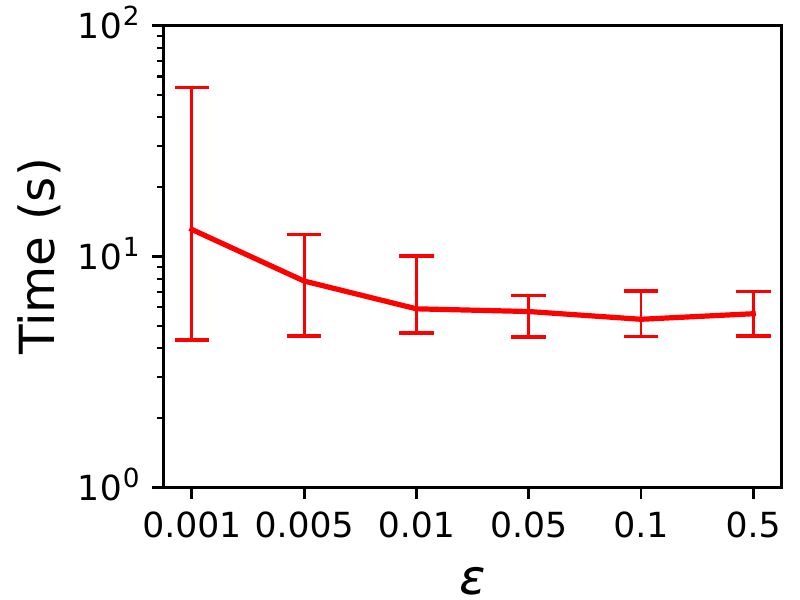}}
  \hfill
  \subcaptionbox{CelebA (Age)}[.325\linewidth]{\includegraphics[width=1.1in]{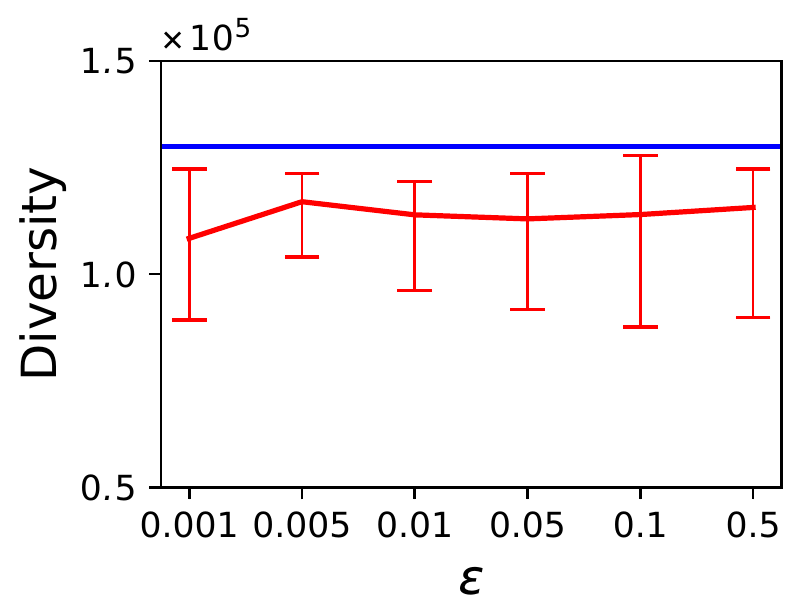}\hfill\includegraphics[width=1.1in]{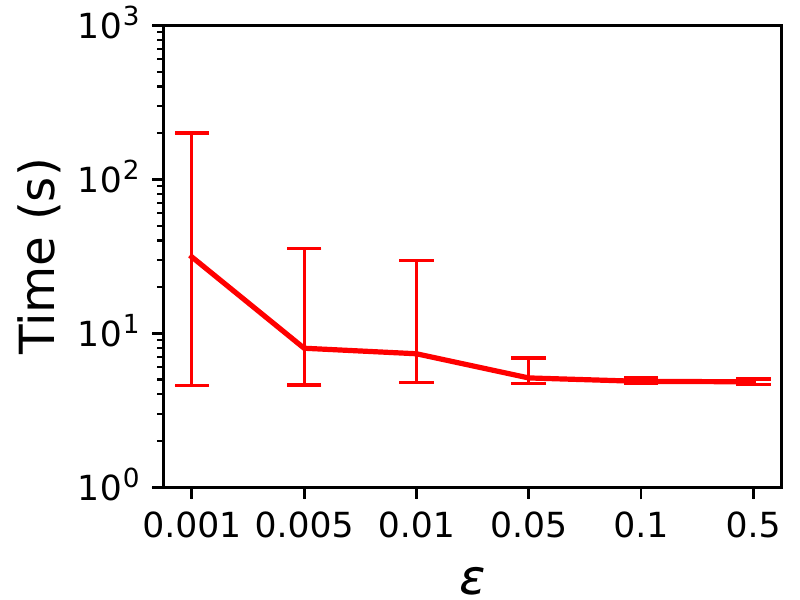}}
  \hfill
  \subcaptionbox{CelebA (S+A)}[.325\linewidth]{\includegraphics[width=1.1in]{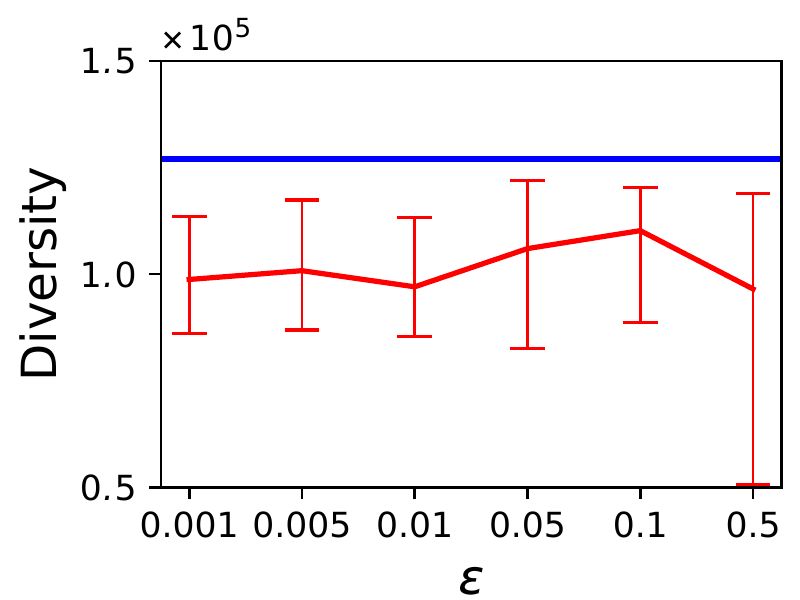}\hfill\includegraphics[width=1.1in]{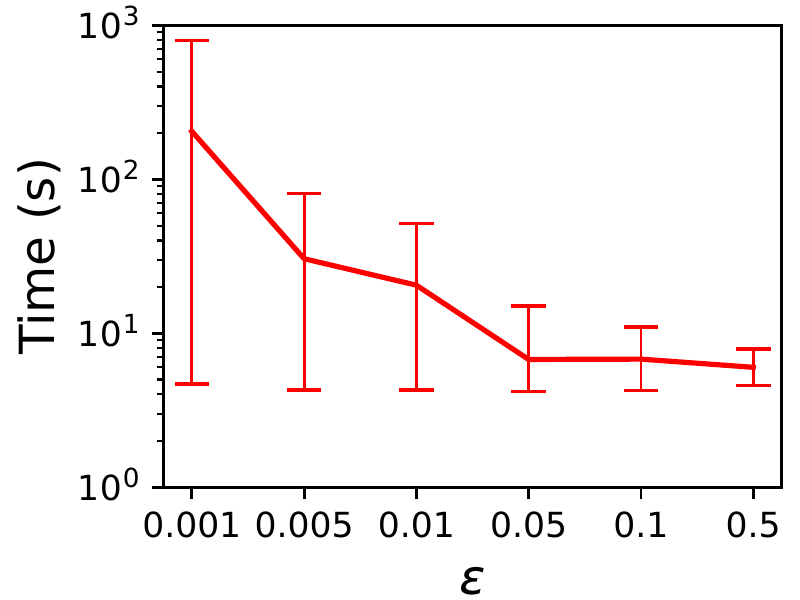}}
  \\
  \subcaptionbox{Census (Sex)}[.325\linewidth]{\includegraphics[width=1.1in]{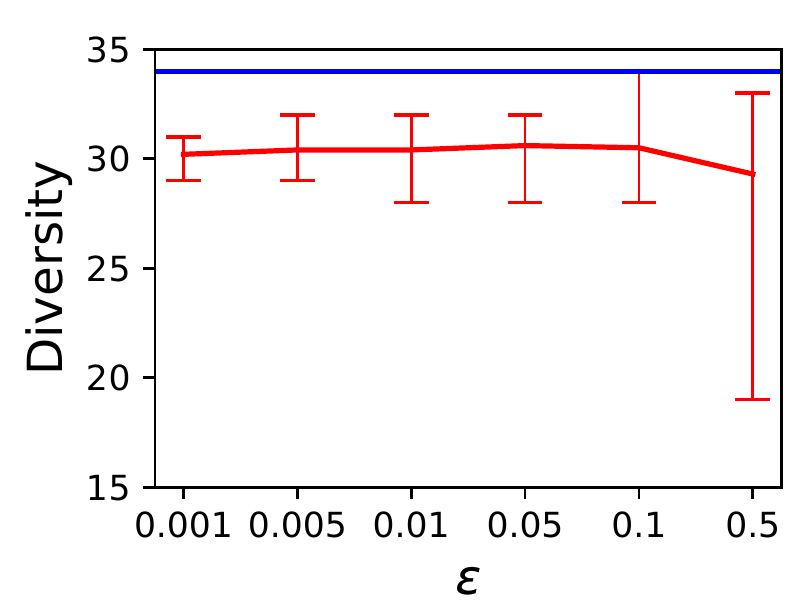}\hfill\includegraphics[width=1.1in]{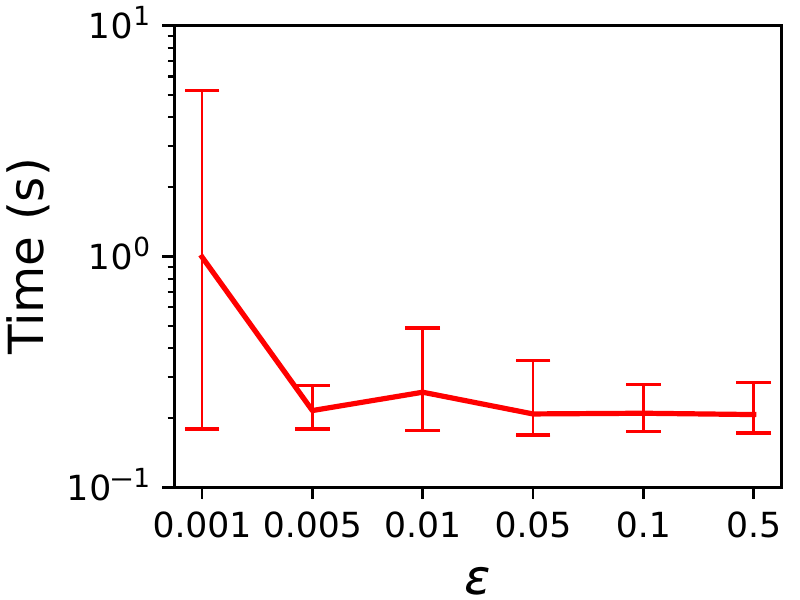}}
  \hfill
  \subcaptionbox{Census (Age)}[.325\linewidth]{\includegraphics[width=1.1in]{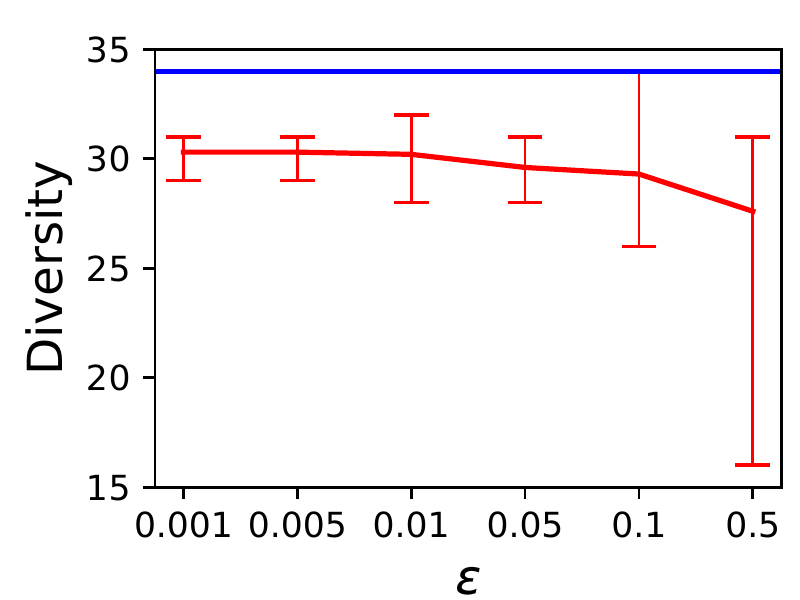}\hfill\includegraphics[width=1.1in]{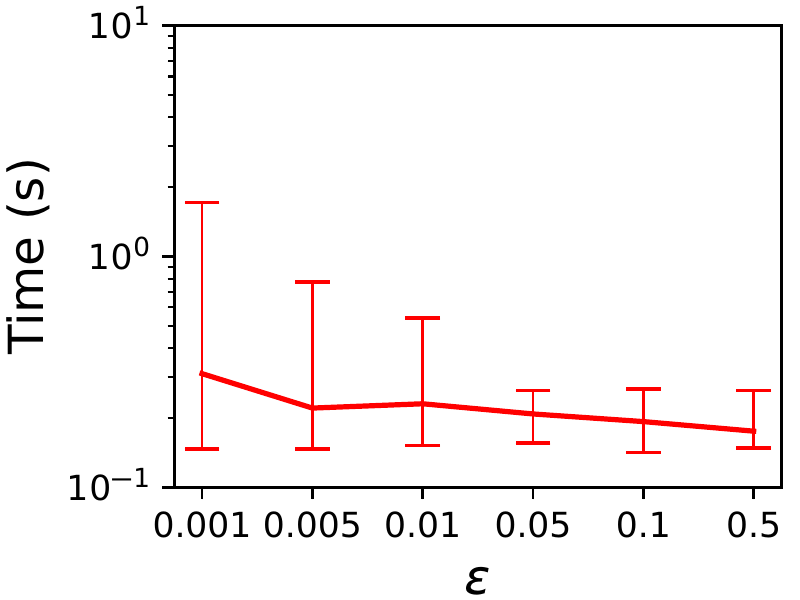}}
  \hfill
  \subcaptionbox{Twitter (Sex)}[.325\linewidth]{\includegraphics[width=1.1in]{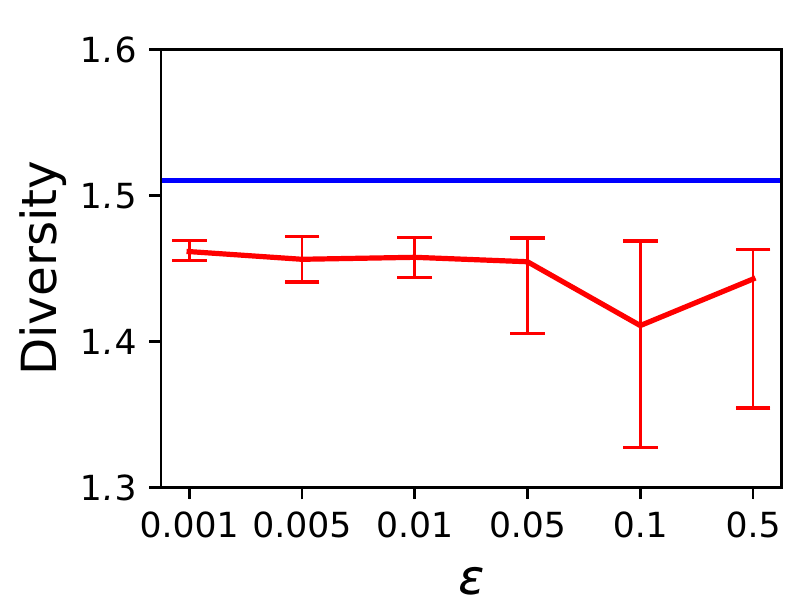}\hfill\includegraphics[width=1.1in]{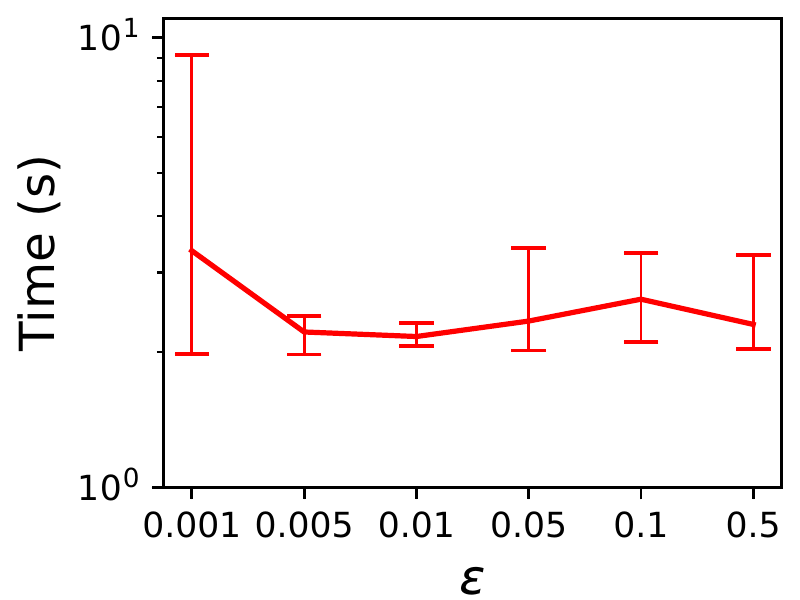}}
  \caption{Results of \textsf{FMMD-S} with varying the parameter $\varepsilon$ for solution size $k = 10$. The error bars in each plot indicate the maximum and minimum of the diversity value and running time over $10$ runs, respectively. In all plots for diversity values, the blue horizontal lines denote the diversity values of the optimal solutions provided by \textsf{FMMD-E}, which measure the gaps between the approximate solutions of \textsf{FMMD-S} and the optimal ones.}\label{fig:eps}
\end{figure}

\section{Parameter Tuning for FMMD-S}
\label{sec:add:exp}

Figure~\ref{fig:eps} illustrates the performance of \textsf{FMMD-S} by varying the parameter $\varepsilon$ from $0.001$ to $0.5$ for solution size $k=10$ on small datasets ($n = 1,000$). Regarding solution quality, the diversity values of the solutions of \textsf{FMMD-S} remain approximately constant when $\varepsilon$ takes small values. However, they decrease and become less stable as $\varepsilon$ becomes larger. Especially, when $\varepsilon = 0.5$, \textsf{FMMD-S} cannot provide stable and high-quality solutions in most cases. In terms of efficiency, the running time increases significantly and becomes less stable for smaller $\varepsilon$, particularly when $\varepsilon \leq 0.01$. These results conform to our theoretical analyses of \textsf{FMMD-S}. The above results show that the most appropriate value of $\varepsilon$ lies in the range $[0.01, 0.1]$ in most cases. Therefore, we decide to set $\varepsilon = 0.05$ in all other experiments throughout this paper.

\end{document}